\theoremstyle{plain}
\newtheorem{theorem}{Theorem}
\newtheorem{lemma}{Lemma}
\newtheorem{corollary}{Corollary}
\theoremstyle{definition}
\newtheorem{remark}{Remark}
\begin{document}

\title{Model of the $p$-Adic Random Walk in a Potential}

\author{A.\,Kh.~Bikulov\thanks{Institute of Chemical Physics, Kosygina Street 4, 117734 Moscow, Russia. E-mail: beecul@mail.ru}\  \ and
 A.\,P.~Zubarev \thanks{Physics Department, Samara State Aerospace University, Moskovskoe shosse 34, 443123, Samara, Russia. E-mail: apzubarev@mail.ru}
\thanks{Physics and Chemistry Department, Samara State University of Railway Transport, Perviy Bezimyaniy pereulok 18, 443066, Samara, Russia.}}
\maketitle

\begin{abstract}
We consider the $p$-adic random walk model in a potential, which
can be viewed as a generalization of $p$-adic random walk models
used for description of protein conformational dynamics. This model is based
on the Kolmogorov--Feller equations for the distribution function
defined on the field of $p$-adic numbers in which the probability
of transitions per unit time depends on  ultrametric distance between
the transition points as well as on  function of potential violating
the spatial homogeneity of a random process. This equation, which
will be called the equation of $p$-adic random walk in a potential,
is equivalent to the equation of $p$-adic random walk with modified
measure and reaction source. With a special choice of a power-law
potential the last equation is shown to have an exact analytic solution.
We find the analytic solution of the Cauchy problem for such equation
with an initial condition, whose support lies in the ring of integer
$p$-adic numbers. We also examine the asymptotic behaviour of the
distribution function for large times. It is shown that in the limit
$t\rightarrow\infty$ the distribution function tends to the equilibrium
solution according to the law, which is bounded from above and below
by power laws with the same exponent. Our principal conclusion is
that the introduction of a potential in the ultrametric model of conformational
dynamics of protein conserves the power-law behaviour of relaxation
curves for large times.
\end{abstract}

\section{Introduction}

An ultrametric space is a metric space~$M$ with
metric $d(x_{1},x_{2})$, $x_{1},\,x_{2}\in M$ satisfying the strong
triangle inequality
\begin{equation}
\forall\,x_{1},\,x_{2}\,x_{3}\in M:\:d(x_{1},x_{2})\le\max\{d(x_{1},x_{3}),\:d(x_{2},x_{3})\}.\label{um}
\end{equation}
A metric $d(x_{1},x_{2})$ satisfying (\ref{um}) is called an ultrametric.
A~classical example of an ultrametric space is the field of $p$-adic
numbers $\mathbb{Q}_{p}$ \cite{Koblitz,Mahler,Sh}. The $p$-adic
numbers were introduced in 1889 by K.~Hesel and were proved widely
useful in algebraic geometry, theory of numbers and representation
theory.

For a long time the ultrametric approach have been found useful in
solving various problems in the field of classification of objects
and information processing of data arrays~\cite{RTV}. In the last
30 years the ultrametric analysis had had a~recent reemergence---it
received a~great impetus from the pioneering works of researches
from the scientific school of Academician V.\,S.~Vladimirov, whose
efforts were later taken up by a~number of researches from different
scientific schools (for an overview, consult, for example,~\cite{ALL}).
This relative new research field is now known as the ``$p$-adic
and ultrametric mathematical physics'' and is blessed by a~number
of books and an immense number of research articles in the area of
$p$-adic analysis, $p$-adic mathematical physics and their applications
to modeling in various areas of physics, biology, computer science,
sociology, physiology, etc.\ (see, for example, \cite{ALL,VVZ} and
the references given therein).

First ultrametric models in physics emerged first in the 1970s in
the theory of spin systems with disorder \cite{RTV,Dayson,Parisi1,Dot}.
It was found that if a~system has a~large number of multiscale ``internal
contradictions'' (frustrations), then it may reach an equilibrium
in hierarchically nested regions of the phase space, the nesting level
increases with decreasing temperature. Here, the ratio of scales of
phase domains satisfies the strong triangle inequality, and thus the
low-temperature spin states are found to be ultrametrically correlated.
Almost immediately after the emergence of the ultrametric model of
spin glass, a~conjecture was made about the ultrametricity of the
space of conformational states of a~protein molecule \cite{Frauen1,Frauen2}.
It should be noted, however, that a~literal transfer of the ultrametric
description of phase states of the spin glass and conformational states
of protein molecule was triggered by a~then-popular analogy between
the low-temperature proteins and glasses~\cite{KG}. As in the case
of glasses, the atomic mobility of polymer globules is accompanied,
in particular, by multiscale constraints generating highly intersected
energy landscapes. In this sense, proteins resemble glasses, some
of their thermodynamical properties are the same (for example, the
low-temperature behaviour of heat capacities). But, as distinct from
disordered globular structures, proteins have a~unique special feature---they
are capable of performing precise operations with objects of atomic
scale, the physical reasons behind it remaining unclear.

A systematic treatment of the conformational dynamics of proteins
based on an ultrametric model is given in a~series of papers by the
authors \cite{ABK_1999,ABKO_2002,ABO_2003,ABO_2004,AB_2008,ABZ_2009,ABZ_2011,ABZ_2013,ABZ_2014}.
In these papers, the conformational dynamics of protein was described
as follows. The conformational state of a~protein was considered
as a~quasi-equilibrium macrostate, which includes many microstates
of a~macromolecule. A~conformational state which is understood in
exactly this sense is also called a~``basin''. Further, there is
a great number of bonds on mechanical degrees of freedom of atoms
in a~macromolecule. As a~result, there are many constraints on the
flexibility of a~macromolecule at various scales of the configuration
space of degrees of freedom of the entire macromolecule. It is assumed
that a~random walk of a~protein over the configuration space is
effected as follows. The most probable are such motions of atoms in
a~macromolecule which take place in small regions of the configuration
space (small-scale rearrangements of atoms). In order that there would
be a~rearrangement of atoms on a~large scale involving a~much larger
region of the configuration space, it is necessary that rearrangements
of smaller scales would result in a~configuration of a~macromolecule
from which a~passage to a~different region of the configuration
space would be enabled by the available bonds. The probability of
such an event is much smaller than the transition probability between
arbitrary small-scale configurations. Hence, the characteristic times
of probability transitions between configurations are much smaller
on more fine scales that those of probability transitions between
configurations on more coarse scales of the configuration space of
a~macromolecule. In accordance with this approach, one may conventionally
single out an increasing family of scales (characteristic sizes of
the regions) of the configuration space of a~macromolecule. Correspondingly,
the characteristic times of a~macromolecule to stay in regions of
the configuration space of various scales increase with increasing
scale. From such a model of a~random walk of a~protein over regions
of the configuration space (conformations or basins) of various scales
one may catch the hierarchy of transitions and the hierarchy of characteristic
times of the transitions. The indexation of such a~hierarchy enables
one to introduce in a~natural way an ultrametric distance between
conformational states. This model features a~number of simplifying
assumptions. Namely, it is assumed that the distance graph is a Kelley
tree and that all conformations are homogeneous in energy. These assumptions
are fairly restrictive, but nevertheless, they enable one to formalize
such a~model by a~$p$-adic random process. In doing so it is assumed
that the probabilities of transitions per unit time through the activation
barriers (that is, transitions between conformational marostates)
obey the Arrhenius relation. Such a~model, which looks simple at
first sight, has enabled one to adequately describe two principal
experiments: the experiment on spectral diffusion in proteins \cite{AB_2008}
and the experiment on binding kinetics of $CO$ by myoglobin \cite{ABZ_2013,ABZ_2014}.
We note that in these two experiments, the fluctuation dynamics of
a~protein is described in a~unified way.

The further development of this model consists in its generalization
by cancelling one or several simplifying assumptions. One of the ways
towards it is to reject the assumption on the homogeneity of conformations
relative to energies. This can be achieved by introducing a~potential
in the model.

The statement of the problem of ultrametric diffusion in a~potential
well is discussed in the literature on $p$-adic mathematical physics
starting from the 2000s. Here, there is a~principal difficulty regarding
the introduction of a~potential in the equation of an ultrametric
random walk. It is known (see, for example, \cite{Gardiner}) that
the right-hand side of the equation for the distribution function
of an arbitrary time-homogeneous Markov process contains three terms.
The first term describes the drift of a~trajectory and is related
with the potential force, the second term describes the diffusion
processes, and the third term describes the jump processes. In the
case when the trajectories of a~random process are everywhere continuous,
the term responsible for the jump processes vanishes and the equation
becomes the Fokker--Planck equation, which contains two terms---the
term responsible for the potential force acting on a~walking particle
and the diffusion term. Since a~$p$-adic random walk is supported
by functions that are constant everywhere except at discontinuities
of the first kind, it follows that the first two terms, including
the term containing the potential, vanish. Containing only the third
term, the equation now becomes the equation, which is conventionally
called the Kolmogorov--Feller equation or the master equation. In
this case, it is natural to introduce a~potential in the form of
some additional structure on an ultrametric space; this additional
structure should be incorporated in the probability of transition
per unit time, which in turn enters the master equation.

One way to introduce potential in the model of a $p$-adic random
walk is to consider the master equation for the probability density
of the form
\begin{equation}
\dfrac{df\left(x,t\right)}{dt}=\intop_{\mathbb{Q}_{p}}\frac{u\left(y\right)f(y,t)-u\left(x\right)f(x,t)}{|x-y|_{p}^{\alpha+1}}d_{p}y,\label{Gen_Eq}
\end{equation}
where $\mathbb{Q}_{p}$ is field of $p$-adic numbers, $d_{p}y$ is
the Haar measure on $\mathbb{Q}_{p}$, $\alpha$ is  positive parameter
interpreted as the inverse temperature, $u\left(x\right)$ is positive
definite function on $\mathbb{Q}_{p}$ (potential function). We
assume that the function $u\left(x\right)$ belongs to the class $W_{\beta}$,
for some $\beta<\alpha$ i.e. satisfies the properties: 1) $|u\left(x\right)|\leq C\left(1+|x|_{p}^{\beta}\right)$;
2) there exists $n$, such that $u\left(x+x'\right)=u\left(x\right)$
for all $|x'|_{p}\leq p^{-n}$.  The Cauchy problem for equation (\ref{Gen_Eq})
is determined in the class of functions $f(x,t)$, which are continuously differentiable in $t$, and as a function of $x \in {\mathbb{Q}_{p}}$  belong to the class class $W_{\gamma}$ for $\gamma<\alpha-\beta$ uniformly with respect to $t$.
In this work, equation (\ref{Gen_Eq}) will be called
the equation of $p$-adic random walk in a~potential $u\left(x\right)$.

At present, methods of analytical solutions of equation (\ref{Gen_Eq})
in potential of arbitrary form $u\left(x\right)$ are unknown. Nevertheless,
it turns out that for a~definite form of a~power-law potential this
equation has a~precise analytic solution.

This article is devoted to the analytical solution of the Cauchy problem
for equation (\ref{Gen_Eq}) with the initial condition on the ring
of integer $p$-adic numbers $\mathbb{Z}_{p}$ for the specific choice
of the potential function $u\left(x\right)$ depending only on $p$-adic
norm $|x|_{p}$ of $x\in\mathbb{Q}_{p}$ as the power law of a special
form. This model can be viewed as a generalization of the models used
in previous works
for description of protein conformational dynamics
\cite{ABK_1999,ABKO_2002,ABO_2003,ABO_2004,AB_2008,ABZ_2009,ABZ_2011,ABZ_2013,ABZ_2014}.
In Section 2 we show that, for such special potentials, the equation
of $p$-adic random walk in a~potential can be reduced to the equation
of $p$-adic random walk with modified measure and reaction sources,
whose support lies in $\mathbb{Z}_{p}$. The last equation will be
the subject of research of the last two sections. In Section 3 we
examine the equation of a~$p$-adic random walk with arbitrary measure
and a~source in~$\mathbb{Z}_{p}$. Using the basis of eigenfunctions
of the Vladimirov operator with modified measure, as was constructed
in our recent paper \cite{BZ_2}, we find an analytic solution of
the Cauchy problem for such an equation with an initial condition
in $\mathbb{Z}_{p}$. In Section 4 we particularize the solution of
this equation in the case when the measure is induced by a~potential
of special form, which was considered in Section 2. In Section 5 we
examine the asymptotic behaviour of the so-obtained solution for the
distribution function for large times. Namely, we establish that the
distribution function tends, as $t\rightarrow\infty$, to the stationary
solution according to a~law that is bounded from above and below
by power laws with the same exponent. Our main conclusion is that
the introduction of a~potential in the ultrametric model of conformational
dynamics of protein conserves the power-law behaviour of the relaxation
curves for large times. In Appendixes A, B, C and D we state and prove
a~number of assertions that are used for solving the principal equation
of the model and in obtaining an asymptotic estimate of the solution
thus obtained for large times.

\section{Particular case of a power-law potential}

We consider the equation of $p$-adic random walk in the potential
(\ref{Gen_Eq}) when the function $u\left(x\right)$ is in the class
$W_{\beta}$ and has the form

\begin{equation}
u\left(x\right)=a\Omega\left(|x|_{p}\right)+b|x|_{p}^{\beta}\left(1-\Omega\left(|x|_{p}\right)\right),\;0\leq\beta<\alpha,\label{u(x)}
\end{equation}
where
\[
\Omega\left(|x|_{p}\right)=\begin{cases}
1, & |x|_{p}\leq1,\\
0, & |x|_{p}>1.
\end{cases}
\]

We denote $L_{l.c.}^{2}\left(\mathbb{Q}_{p},d_{p}x\right)\subset W_{0}$
the class of locally constant functions of functions square-integrable
with respect to $p$-adic Haar measure $d_{p}x$. Also we denote $L_{l.c.}^{2}\left(\mathbb{Q}_{p},u\left(x\right)d_{p}x\right)$
the class of locally constant functions of functions square-integrable
with respect to measure $u\left(x\right)d_{p}x$. Let us state the
Cauchy problem for (\ref{Gen_Eq}) with the potential (\ref{u(x)})
and the initial condition

\begin{equation}
f\left(x,0\right)=f_{0}\left(x\right)\label{ic}
\end{equation}
in the class $L_{l.c.}^{2}\left(\mathbb{Q}_{p},d_{p}x\right)$. The
following result holds.

\begin{lemma} Let constants $a$, $b$, $\alpha$ and $\beta$ satisfy
the conditions
\begin{equation}
a=b\dfrac{1-p^{-1}}{1-p^{-\alpha}},\label{rel_a_b}
\end{equation}
\begin{equation}
\beta=\alpha-1.\label{beta}
\end{equation}
Then the solution of the Cauchy problem for equation
\begin{equation}
\dfrac{df\left(x,t\right)}{dt}=\intop_{\mathbb{Q}_{p}}
\frac{f(y,t)-f(x,t)}{|x-y|_{p}^{\alpha+1}}u\left(y\right)d_{p}y+\lambda\Omega\left(|x|_{p}\right)f(x,t),\label{GE-Omega}
\end{equation}
with the initial condition (\ref{ic}) (here $\lambda=b\Gamma_{p}\left(\alpha\right)\varGamma_{p}\left(-\alpha\right)$
and $\Gamma_{p}(-\alpha)=\dfrac{1-p^{-\alpha-1}}{1-p^{\alpha}}$ is
$p$-adic gamma function) in the class $L_{l.c.}^{2}\left(\mathbb{Q}_{p},u\left(x\right)d_{p}x\right)$
is the solution of the Cauchy problem for equation (\ref{Gen_Eq})
in the class $L_{l.c.}^{2}\left(\mathbb{Q}_{p},d_{p}x\right)$ with
the initial condition (\ref{ic}). \label{lemma1} \end{lemma}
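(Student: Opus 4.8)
The plan is to show that, under the stated conditions (\ref{rel_a_b})--(\ref{beta}), the right-hand sides of (\ref{Gen_Eq}) and (\ref{GE-Omega}), applied to one and the same function $f(\cdot,t)$, coincide pointwise; then a function solving one Cauchy problem automatically solves the other, with the same initial data (\ref{ic}). First I would write the numerator of (\ref{Gen_Eq}) as $u(y)f(y,t)-u(x)f(x,t)=u(y)\bigl(f(y,t)-f(x,t)\bigr)+f(x,t)\bigl(u(y)-u(x)\bigr)$ and split the integral accordingly. The first piece is exactly the integral operator appearing in (\ref{GE-Omega}); the second produces $f(x,t)\,J(x)$ with
\begin{equation}
J(x)=\intop_{\mathbb{Q}_{p}}\frac{u(y)-u(x)}{|x-y|_{p}^{\alpha+1}}\,d_{p}y. \label{key-integral}
\end{equation}
Hence the whole statement reduces to the pointwise identity $J(x)=\lambda\,\Omega(|x|_{p})$. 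A point requiring care here is the legitimacy of the splitting: the original integrand in (\ref{Gen_Eq}) is absolutely integrable for $f$ in the admissible class $W_{\gamma}$ with $\gamma<\alpha-\beta$, and one must check that each of the two resulting integrals converges separately---the second because $u(y)-u(x)$ vanishes as $y\to x$ (local constancy of $u$), while at infinity $u(y)\sim b|y|_{p}^{\alpha-1}$ against the denominator $|x-y|_{p}^{\alpha+1}$ leaves a convergent tail.

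Next I would evaluate (\ref{key-integral}) by the ultrametric sphere decomposition, treating the two regimes of $|x|_{p}$ separately. For $x\in\mathbb{Z}_{p}$ (so $\Omega(|x|_{p})=1$ and $u(x)=a$) the contribution of $y\in\mathbb{Z}_{p}$ vanishes because $u(y)=a=u(x)$ there, leaving $J(x)=\int_{|y|_{p}>1}\bigl(b|y|_{p}^{\alpha-1}-a\bigr)|y|_{p}^{-\alpha-1}\,d_{p}y$, where I used $|x-y|_{p}=|y|_{p}$ for $|y|_{p}>1\ge|x|_{p}$. Summing over the spheres $|y|_{p}=p^{n}$, $n\ge1$, each of measure $p^{n}(1-p^{-1})$, the choice $\beta=\alpha-1$ collapses the $b$-term into the geometric series $\sum_{n\ge1}p^{-n}$, and the relation (\ref{rel_a_b}) is precisely what makes the resulting value equal $\lambda=b\,\Gamma_{p}(\alpha)\Gamma_{p}(-\alpha)$.

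For $|x|_{p}=p^{m}$ with $m\ge1$ (so $\Omega(|x|_{p})=0$) the goal is $J(x)=0$. The crucial simplification is that on the sphere $|y|_{p}=|x|_{p}$ one has $u(y)=b|y|_{p}^{\alpha-1}=u(x)$, so that sphere contributes nothing and no finer decomposition by $|x-y|_{p}$ is needed. On the inner ball $|y|_{p}\le p^{m-1}$ one has $|x-y|_{p}=p^{m}$, while on the outer region $|y|_{p}\ge p^{m+1}$ one has $|x-y|_{p}=|y|_{p}$; evaluating both as geometric series over spheres and again inserting (\ref{rel_a_b}) and (\ref{beta}) I would verify that the inner and outer contributions cancel exactly. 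I expect the main obstacle to be precisely this second case---keeping the sphere-by-sphere bookkeeping and the two geometric series (in $p^{-n}$ and in $p^{-n\alpha}$) correct so that the cancellation is manifest---together with confirming at the outset that the split integrals converge in the prescribed function classes. Once $J(x)=\lambda\,\Omega(|x|_{p})$ is established, the two right-hand sides agree on every admissible $f$, and the asserted equivalence of the two Cauchy problems follows.
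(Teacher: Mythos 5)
Your proposal is correct, and it reaches the key identity $J(x)=\lambda\,\Omega(|x|_p)$ by a genuinely different route than the paper. Both arguments begin identically: split $u(y)f(y,t)-u(x)f(x,t)=u(y)\bigl(f(y,t)-f(x,t)\bigr)+f(x,t)\bigl(u(y)-u(x)\bigr)$, so that (\ref{Gen_Eq}) becomes the modified-measure equation plus the source $J(x)f(x,t)$. The paper then evaluates $J(x)=\Gamma_p(-\alpha)D^{\alpha}u(x)$ in Fourier space: it computes the $p$-adic Fourier image $\widetilde{u}(k)$ from the standard formulas for $\Omega(|x|_p)$ and $|x|_p^{\beta}(1-\Omega(|x|_p))$, and observes that condition (\ref{rel_a_b}) kills the coefficient of $D^{\alpha}\Omega(|x|_p)$ while condition (\ref{beta}) turns the remaining operator $D^{\alpha-\beta-1}$ into the identity, leaving $D^{\alpha}u=b\Gamma_p(\alpha)\Omega(|x|_p)$. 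You instead compute $J(x)$ directly in $x$-space by sphere decomposition, with the case split $|x|_p\le 1$ versus $|x|_p=p^{m}$, $m\ge 1$. I checked both of your cases: for $x\in\mathbb{Z}_p$ one gets
\begin{equation*}
J(x)=(1-p^{-1})\sum_{n\ge1}\bigl(bp^{-n}-ap^{-n\alpha}\bigr)=bp^{-1}-a\,\frac{(1-p^{-1})p^{-\alpha}}{1-p^{-\alpha}},
\end{equation*}
which under (\ref{rel_a_b}) indeed equals $b\Gamma_p(\alpha)\Gamma_p(-\alpha)=\lambda$; for $|x|_p=p^{m}$ the inner-ball and outer-region contributions sum to $p^{-m(\alpha+1)}\bigl(a-b\tfrac{1-p^{-1}}{1-p^{-\alpha}}\bigr)=0$, exactly as you predict. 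What each approach buys: the paper's Fourier route is shorter and makes transparent \emph{why} the two hypotheses take the form they do (one annihilates a term, the other reduces an exponent to zero); your route is more elementary and arguably more rigorous as written, since $|x|_p^{\beta}(1-\Omega(|x|_p))$ is not integrable and the paper's Fourier-image formula is really a regularized/distributional identity that it does not pause to justify, whereas your sphere-by-sphere computation involves only absolutely convergent series, and you also address the legitimacy of the splitting explicitly. One small point you should add to match the lemma's precise statement: the class containment $L_{l.c.}^{2}\left(\mathbb{Q}_{p},u(x)d_{p}x\right)\subset L_{l.c.}^{2}\left(\mathbb{Q}_{p},d_{p}x\right)$ (immediate since $u$ is bounded below by a positive constant for $\alpha>1$), which is needed so that a solution of the modified problem lies in the class prescribed for the original one; the paper notes this in its first sentence.
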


\begin{proof} First of all, note that $L_{l.c.}^{2}\left(\mathbb{Q}_{p},u\left(x\right)d_{p}x\right)\subset L_{l.c.}^{2}\left(\mathbb{Q}_{p},d_{p}x\right)$.
Next, we show that under conditions (\ref{rel_a_b}) and (\ref{beta})
the equation (\ref{Gen_Eq}) can be transformed to the form (\ref{GE-Omega}).
The equation (\ref{Gen_Eq}) can be rewritten in the form of the equation
of $p$-adic random walk with the modified measure $u\left(y\right)d_{p}y$
and reaction source as follows
\begin{equation}
\dfrac{df\left(x,t\right)}{dt}=\intop_{\mathbb{Q}_{p}}\frac{f(y,t)-f(x,t)}{|x-y|_{p}^{\alpha+1}}u\left(y\right)d_{p}y+J\left(x\right)f(x,t).\label{EQ_mod}
\end{equation}
where

\begin{equation}
J\left(x\right)=\varGamma_{p}\left(-\alpha\right)D^{\alpha}u\left(x\right),\label{J(x)}
\end{equation}
$D^{\alpha}$ is the pseudodifferential Vladimirov operator \cite{VVZ}
\[
D^{\alpha}\varphi\left(x\right)=\dfrac{1}{\varGamma_{p}\left(-\alpha\right)}\intop_{\mathbb{Q}_{p}}\frac{\varphi\left(y\right)-\varphi\left(x\right)}{|x-y|_{p}^{\alpha+1}}d_{p}y.
\]
To calculate $D^{\alpha}u\left(x\right)$ we find the $p$-adic Fourier
image $\widetilde{u}\left(k\right)$ of the function $u\left(x\right)$.
Using the relations
\[
\intop_{\mathbb{Q}_{p}}d_{p}x\Omega\left(|x|_{p}\right)\chi\left(kx\right)=\Omega\left(|k|_{p}\right)
\]
and
\[
\intop_{\mathbb{Q}_{p}}d_{p}x|x|_{p}^{\beta}\left(1-\Omega\left(|x|_{p}\right)\right)\chi\left(kx\right)=
\]
\[
=\Gamma_{p}\left(\alpha\beta+1\right)|k|_{p}^{-\beta-1}-\dfrac{1-p^{-1}}{1-p^{-\alpha\beta-1}}\Omega\left(|k|_{p}\right)-\Gamma_{p}\left(\alpha\beta+1\right)|k|_{p}^{-\beta-1}\left(1-\Omega\left(|k|_{p}\right)\right)
\]
we find that
\[
\widetilde{u}\left(k\right)=\intop_{\mathbb{Q}_{p}}d_{p}xu\left(x\right)\chi\left(kx\right)=
\]
\[
=a\Omega\left(|k|_{p}\right)+b\left(-\dfrac{1-p^{-1}}{1-p^{-\beta-1}}\Omega\left(|k|_{p}\right)+\Gamma_{p}\left(\beta+1\right)|k|_{p}^{-\beta-1}\Omega\left(|k|_{p}\right)\right).
\]
We have
\[
D^{\alpha}u\left(x\right)=\intop_{\mathbb{Q}_{p}}d_{p}k|k|_{p}^{\alpha}\widetilde{u}\left(k\right)\chi\left(-kx\right),
\]
and hence
\[
D^{\alpha}u\left(x\right)=\left(a-b\dfrac{1-p^{-1}}{1-p^{-\beta-1}}\right)D^{\alpha}\Omega\left(|x|_{p}\right)+b\Gamma_{p}\left(\alpha\right)D^{-\beta+\alpha-1}\Omega\left(|x|_{p}\right)
\]
In view of conditions (\ref{rel_a_b}) and (\ref{beta}), we find
that
\[
D^{\alpha}u\left(x\right)=b\Gamma_{p}\left(\alpha\right)\Omega\left(|x|_{p}\right).
\]
As a result,
\[
J\left(x\right)=\intop_{\mathbb{Q}_{p}}\frac{u\left(y\right)-u\left(x\right)}{|x-y|_{p}^{\alpha+1}}dy=\varGamma_{p}\left(-\alpha\right)D^{\alpha}u\left(x\right)=b\Gamma_{p}\left(\alpha\right)\varGamma_{p}\left(-\alpha\right)\Omega\left(|x|_{p}\right)=\lambda\Omega\left(|x|_{p}\right)
\]
and so (\ref{EQ_mod}) assumes the form (\ref{GE-Omega}), which completes
the proof of Lemma~\ref{lemma1}.\end{proof}

\begin{remark} Note that in the framework of assumptions (\ref{rel_a_b})
and (\ref{beta}) of Lemma~\ref{lemma1} we have $\lambda>0$ for
$\alpha>1$, $\lambda<0$ for $\alpha<1$, and $\lambda=0$
for $\alpha=1$. Moreover, for $\alpha<1$ we have $\beta<0$,
hence we will study only the case $\alpha>1$.\label{rem1}\end{remark}

\begin{remark} For $\alpha>2$ the equation has the normalized
to unity stationary solution
\begin{equation}
f^{(st)}\left(x\right)=\left(\dfrac{1-p^{-\alpha}}{1-p^{-1}}+\dfrac{1-p^{-1}}{p^{\alpha-2}-1}\right)^{-1}\left(\dfrac{1-p^{-\alpha}}{1-p^{-1}}\Omega\left(|x|_{p}\right)+|x|_{p}^{-\alpha+1}\left(1-\Omega\left(|x|_{p}\right)\right)\right).\label{f^(st)}
\end{equation}
Later we will show that the equilibrium solution of equation (\ref{Gen_Eq})
\[
f^{(eq)}\left(x\right)=\lim_{t\rightarrow\infty}f\left(x,t\right)
\]
is
\[
f^{(eq)}\left(x\right)=\begin{cases}
f^{(st)}\left(x\right), & \alpha>2,\\
0, & 1\leq\alpha\leq 2.
\end{cases}
\]
\label{rem2}\end{remark}

\[
f^{(eq)}\left(x\right)=\begin{cases}
f^{(st)}\left(x\right), & \alpha>2,\\
0, & 1\leq\alpha\leq 2.
\end{cases}
\]

Solution of the Cauchy problem for the equation (\ref{GE-Omega})
in the class $L_{l.c.}^{2}\left(\mathbb{Q}_{p},u\left(x\right)d_{p}x\right)$
can be found analytically. Below we shall find and explore an analytical
solution of this problem for the homogeneous initial condition in
$\mathbb{Z}_{p}$.

\section{The general solution of the equation of $p$-adic random walk
with general modified measure $u\left(x\right)d_{p}x$ and reaction
source in~$\mathbb{Z}_{p}$ }

In this section we solve the Cauchy problem of equation

\begin{equation}
\dfrac{df\left(x,t\right)}{dt}=\intop_{\mathbb{Q}_{p}}\frac{f(y,t)-f(x,t)}{|x-y|_{p}^{\alpha+1}}u\left(y\right)d_{p}y+\lambda\Omega\left(|x|_{p}\right)f(x,t)\label{GE_m(x)_Omega}
\end{equation}
with the initial condition
\begin{equation}
f\left(x,0\right)=\varOmega\left(|x|_{p}\right)\label{ic_Vl}
\end{equation}
and an arbitrary function $u\left(x\right)$. In finding the solution
we shall use a~number of auxiliary results, which will be formulated
and proved in Appendix~A.

The solution of the Cauchy problem (\ref{GE_m(x)_Omega})--(\ref{ic_Vl})
will be searched in the subclass of locally constant functions of
the space $L_{norm}^{2}\left(\mathbb{Q}_{p},\:u\left(x\right)d_{p}x\right)$
(see Appendix~A). Passing to the Laplace images, we obtain
\begin{equation}
s\widetilde{f}\left(x,s\right)=\varOmega\left(|x|_{p}\right)+D_{u(x)}^{\alpha}\widetilde{f}\left(x,s\right)+\lambda\varOmega\left(|x|_{p}\right)\widetilde{f}\left(x,s\right).\label{Eq_Vl_Lap}
\end{equation}

Employing Theorem~\ref{th1} from Appendix~A, we expand the function
$\widetilde{f}\left(x,s\right)$ in basis (\ref{basis_f}),
\begin{equation}
\widetilde{f}\left(x,s\right)=\sum_{i=-\infty}^{\infty}\widetilde{f}_{i}\left(s\right)\phi_{i}\left(x\right).\label{decomp_basis}
\end{equation}

An application of Theorems \ref{th2} and \ref{th3} and Corollary
\ref{cor1} from Appendix~A shows that
\[
s\widetilde{f}_{k}\left(s\right)=V_{0}V_{k-1}^{-\tfrac{1}{2}}\left(1-\dfrac{V_{k-1}}{V_{k}}\right)^{\tfrac{1}{2}}-\left(1-p^{-\left(\alpha+1\right)}\right)\left(\sum_{i=k}^{\infty}p^{-i\left(\alpha+1\right)}V_{i}\right)\widetilde{f}_{k}\left(s\right)+
\]
\[
+\lambda\left(V_{0}V_{k-1}^{-1}\left(1-\dfrac{V_{k-1}}{V_{k}}\right)^{-1}\left(1-3\dfrac{V_{k-1}}{V_{k}}+2\dfrac{V_{k-1}^{2}}{V_{k}^{2}}\right)+\sum_{i=k}^{\infty}V_{0}V_{i-1}^{-1}\left(1-\dfrac{V_{i-1}}{V_{i}}\right)\right)\widetilde{f}_{k}\left(s\right)+
\]
\[
-2\lambda V_{0}V_{k-1}^{-1}\left(1-\dfrac{V_{k-1}}{V_{k}}\right)\widetilde{f}_{k}\left(s\right)+
\]
\begin{equation}
+\lambda V_{0}V_{k-1}^{-\tfrac{1}{2}}\left(1-\dfrac{V_{k-1}}{V_{k}}\right)^{\tfrac{1}{2}}\sum_{i=1}^{\infty}f_{i}\left(s\right)V_{i-1}^{-\tfrac{1}{2}}\left(1-\dfrac{V_{i-1}}{V_{i}}\right)^{\tfrac{1}{2}}\label{k_ge_1}
\end{equation}
for $k\geq1$ and
\begin{equation}
s\widetilde{f}_{k}\left(s\right)=-\left(1-p^{-\left(\alpha+1\right)}\right)\left(\sum_{i=k}^{\infty}p^{-i\left(\alpha+1\right)}V_{i}\right)\widetilde{f}_{k}\left(s\right)+\lambda\sum_{i=k+1}^{\infty}\sum_{i=k+1}^{\infty}V_{0}V_{i-1}^{-1}\left(1-\dfrac{V_{i-1}}{V_{i}}\right)\widetilde{f}_{k}\left(s\right)\label{k_l_1}
\end{equation}
for $k<1$.

Equation (\ref{k_ge_1}) can be rewritten as
\begin{equation}
\widetilde{f}_{k}\left(s\right)=g_{k}\left(s\right)+\lambda g_{k}\left(s\right)\sum_{i=1}^{\infty}\widetilde{f}_{i}\left(s\right)V_{i-1}^{-\tfrac{1}{2}}\left(1-\dfrac{V_{i-1}}{V_{i}}\right)^{\tfrac{1}{2}}\label{f(s)}
\end{equation}
where
\[
g_{k}\left(s\right)=V_{0}V_{k-1}^{-\tfrac{1}{2}}\left(1-\dfrac{V_{k-1}}{V_{k}}\right)^{\tfrac{1}{2}}\left[s+\left(1-p^{-\left(\alpha+1\right)}\right)\left(\sum_{i=k}^{\infty}p^{-i\left(\alpha+1\right)}V_{i}\right)-\right.
\]
\[
-\lambda V_{0}V_{k-1}^{-1}\left(1-\dfrac{V_{k-1}}{V_{k}}\right)^{-1}\left(1-3\dfrac{V_{k-1}}{V_{k}}+2\dfrac{V_{k-1}^{2}}{V_{k}^{2}}\right)-
\]
\begin{equation}
\left.-\lambda\sum_{i=k}^{\infty}V_{0}V_{i-1}^{-1}\left(1-\dfrac{V_{i-1}}{V_{i}}\right)+2\lambda V_{0}V_{k-1}^{-1}\left(1-\dfrac{V_{k-1}}{V_{k}}\right)\right]^{-1}.\label{g_k(s)}
\end{equation}
Setting
\[
F\left(s\right)=\sum_{i=1}^{\infty}\widetilde{f}_{i}\left(s\right)V_{i-1}^{-\tfrac{1}{2}}\left(1-\dfrac{V_{i-1}}{V_{i}}\right)^{\tfrac{1}{2}},
\]
\[
G\left(s\right)=\sum_{i=1}^{\infty}g_{i}\left(s\right)V_{i-1}^{-\tfrac{1}{2}}\left(1-\dfrac{V_{i-1}}{V_{i}}\right)^{\tfrac{1}{2}},
\]
we write equation (\ref{f(s)}) in the form
\begin{equation}
\widetilde{f}_{k}\left(s\right)=g_{k}\left(s\right)\left(1+\lambda F\left(s\right)\right).\label{f(s)_F}
\end{equation}
Multiplying (\ref{f(s)_F}) by $V_{i-1}^{-\tfrac{1}{2}}\left(1-\dfrac{V_{i-1}}{V_{i}}\right)^{\tfrac{1}{2}}$
and summing in~$k$, we find that
\[
F\left(s\right)=G\left(s\right)+\lambda G\left(s\right)F\left(s\right),
\]
and hence,
\begin{equation}
F\left(s\right)=\dfrac{G\left(s\right)}{1-\lambda G\left(s\right)}.\label{F(s)}
\end{equation}
Substituting (\ref{F(s)}) into (\ref{f(s)_F}), this gives
\begin{equation}
\widetilde{f}_{k}\left(s\right)=\dfrac{g_{k}\left(s\right)}{1-\lambda G\left(s\right)}.\label{sol_f(s)}
\end{equation}
From equation (\ref{k_l_1}) we find that
\begin{equation}
\widetilde{f}_{k}\left(s\right)=0,\:k<1.\label{til_f_l_1}
\end{equation}
In view of (\ref{sol_f(s)}) and (\ref{til_f_l_1}), function (\ref{decomp_basis})
assumes the form
\begin{equation}
\widetilde{f}\left(x,s\right)=\sum_{i=1}^{\infty}\dfrac{g_{k}\left(s\right)}{1-\lambda G\left(s\right)}\phi_{k}\left(x\right)\label{gen_solution}
\end{equation}
Expression (\ref{gen_solution}) is the general solution of the equation
of $p$-adic random walk with source in~$\mathbb{Z}_{p}$ with the
initial distribution in Laplace images.

\section{Solution in the case of a modified measure induced by a~power-law
potential}

In this section we particularize the solution (\ref{gen_solution})
with the following function $u\left(x\right)$:
\begin{equation}
u\left(x\right)=b\left(\dfrac{1-p^{-1}}{1-p^{-\alpha}}\Omega\left(|x|_{p}\right)+|x|_{p}^{\alpha-1}\left(1-\Omega\left(|x|_{p}\right)\right)\right).\label{u(x)_our}
\end{equation}
Calculating the volumes of balls (\ref{V_i_gen}) with respect to
the measure $u\left(x\right)d_{p}x$, this gives
\begin{equation}
V_{i}=b\left(1-p^{-1}\right)\dfrac{p^{\alpha\left(i+1\right)}}{p^{\alpha}-1}.\label{V_i}
\end{equation}
Here, the basis functions (\ref{basis_f}) assume the form
\begin{equation}
\phi_{i}\left(x\right)=\left(b\left(1-p^{-1}\right)p^{\alpha\left(i-1\right)}\right)^{-\tfrac{1}{2}}\left(\varOmega\left(|x|_{p}p^{-i+1}\right)-p^{-\alpha}\varOmega\left(|x|_{p}p^{-i}\right)\right).\label{phi_i_our}
\end{equation}
Calculating functions (\ref{g_k(s)}), we obtain
\begin{equation}
g_{k}\left(s\right)=\left(bp^{\alpha}\left(1-p^{-1}\right)\right)^{\tfrac{1}{2}}\dfrac{p^{-\alpha k/2}}{s-bp^{-k}p^{\alpha}\Gamma_{p}(-\alpha)}.\label{sol_g(s)}
\end{equation}
The function $G\left(s\right)$ reads as
\begin{equation}
G\left(s\right)=\sum_{i=1}^{\infty}\dfrac{\left(p^{\alpha}-1\right)p^{-\alpha i}}{s-bp^{-i}p^{\alpha}\Gamma_{p}(-\alpha)}.\label{sol_G(s)}
\end{equation}
Substituting (\ref{sol_g(s)}) and (\ref{sol_G(s)}) into (\ref{sol_f(s)}),
we find that
\begin{equation}
\widetilde{f}_{k}\left(s\right)=\left(bp^{\alpha}\left(1-p^{-1}\right)\right)^{\tfrac{1}{2}}\dfrac{p^{-\alpha k/2}}{s-bp^{-k}p^{\alpha}\Gamma_{p}(-\alpha)}\dfrac{1}{1-\lambda\sum_{i=1}^{\infty}\dfrac{\left(p^{\alpha}-1\right)p^{-\alpha i}}{s-bp^{-i}p^{\alpha}\Gamma_{p}(-\alpha)}}.\label{final_f(s)}
\end{equation}
Next, in view of (\ref{phi_i_our}), we have
\[
\widetilde{f}\left(x,s\right)=\sum_{k=1}^{\infty}\dfrac{p^{-\alpha k}p^{\alpha}\left(\varOmega\left(|x|_{p}p^{-k+1}\right)-p^{-\alpha}\varOmega\left(|x|_{p}p^{-k}\right)\right)}{\left(s-bp^{-k}p^{\alpha}\Gamma_{p}(-\alpha)\right)\left(1-\lambda\sum_{i=1}^{\infty}\dfrac{\left(p^{\alpha}-1\right)p^{-\alpha i}}{s-bp^{-i}p^{\alpha}\Gamma_{p}(-\alpha)}\right)}.
\]
Here, the Laplace image of the probability of finding the trajectory
of a~random process in the support of the initial distribution is
as follows:
\[
\intop_{\mathbb{Z}_{p}}\widetilde{f}\left(x,s\right)d_{p}x=\sum_{k=1}^{\infty}\dfrac{p^{-\alpha k}\left(p^{\alpha}-1\right)}{\left(s-bp^{-k}p^{\alpha}\Gamma_{p}(-\alpha)\right)\left(1-\lambda\sum_{i=1}^{\infty}\dfrac{\left(p^{\alpha}-1\right)p^{-\alpha i}}{s-bp^{-i}p^{\alpha}\Gamma_{p}(-\alpha)}\right)}=
\]
\begin{equation}
=\dfrac{G\left(s\right)}{1-\lambda G\left(s\right)}=F\left(s\right).\label{SP}
\end{equation}

\section{Asymptotic behaviour of the distribution function as $t\rightarrow\infty$}

We first examine the asymptotic behaviour as $t\rightarrow\infty$
of the probability of finding the trajectory in the support of the
initial distribution, which we denote by $S\left(t\right)$ and which
is the Laplace preimage of function (\ref{SP}). To this aim we need
to find the poles and residues of the function $G\left(s\right)$
(see~(\ref{sol_G(s)})). Clearly, the function $G\left(s\right)$
is positive at zero, and besides,
\[
G\left(0\right)=-\dfrac{\left(p^{\alpha}-1\right)}{bp^{\alpha}\Gamma_{p}(-\alpha)}\sum_{i=1}^{\infty}p^{-\left(\alpha-1\right)i}=\dfrac{1}{\lambda}.
\]
The function $G\left(s\right)$ vanishes at infinity
\[
\lim_{s\rightarrow\infty}G\left(s\right)=0.
\]
The derivative of $G\left(s\right)$ is negative
\[
\dfrac{d}{ds}G\left(s\right)=-\sum_{i=1}^{\infty}\dfrac{\left(p^{\alpha}-1\right)p^{-\alpha i}}{\left(s-bp^{-i}p^{\alpha}\Gamma_{p}(-\alpha)\right)^{2}} <0,
\]
and hence the function $G\left(s\right)$ is decreasing. The function
$G\left(s\right)$ has simple poles at the points $s_{k}=bp^{-k}p^{\alpha}\Gamma_{p}(-\alpha)<0$,
which correspond to the zeros of the function $F\left(s\right)$.
To the left from the poles $s_{k}=bp^{-k}p^{\alpha}\Gamma_{p}(-\alpha)<0$
the function $G\left(s\right)$ is negative, and to the right, is
positive. The points of intersection of the function $G\left(s\right)$
with the line $G=\dfrac{1}{\lambda}$ correspond to the poles of the
function $F\left(s\right)$. Besides, the function $F\left(s\right)$
has an evident pole at the point $s=0$, at which the function $G\left(s\right)$
assumes the value $\dfrac{1}{\lambda}$. Clearly, all the poles of
the function $F\left(s\right)$ are negative, except for the pole
$s=0$. The function $G\left(s\right)$ vanishes at the infinity,
and hence so is the function $F\left(s\right)$:
\[
\lim_{s\rightarrow\infty}F\left(s\right)=0.
\]

Let $s=-\lambda_{i}$, $i\in\mathbb{Z}_{+}$, be the poles of the
function $F\left(s\right)$. Note that the function $F(s)$ is not
meromorphic (the point $s=0$ is a~point of condensation of simple
poles). Nevertheless, one may show (see~\cite{ABZ_2009} for details)
that the function $F(s)$ can be expanded in simple poles,
\[
F(s)=\dfrac{b_{0}}{s}+\sum_{k=1}^{\infty}\dfrac{b_{k}}{s+\lambda_{k}}.
\]
Clearly, $\lambda_{i}$ satisfy the condition
\[
-bp^{-\left(i+1\right)}p^{\alpha}\Gamma_{p}(-\alpha)\leq\lambda_{i}\leq-bp^{-i}p^{\alpha}\Gamma_{p}(-\alpha).
\]
We write $\lambda_{i}$ as
\[
\lambda_{i}=-bp^{-i-1}p^{\alpha}\Gamma_{p}(-\alpha)\left(1+\delta_{i}\right),
\]
where $\delta_{i}$ is a small positive error satisfying the condition
\begin{equation}
0<\delta_{i}<p-1.\label{delta_i}
\end{equation}
From the equation for the poles,
\[
G\left(-\lambda_{k}\right)=\dfrac{1}{\lambda},
\]
it follows that
\begin{equation}
\sum_{i=1}^{\infty}\dfrac{p^{-\alpha i}}{p^{-k-1}+p^{-k-1}\delta_{k}-p^{-i}}=-\dfrac{1}{p^{\alpha-1}-1}.\label{delta_k}
\end{equation}
Equation (\ref{delta_k}) enables one to obtain an asymptotic estimate
for $\delta_{k}$ as $k\rightarrow\infty$ (see Appendix~B, formula~(\ref{delta_gen})).

The residues at the poles $s=-\lambda_{i}$ of the function $F\left(s\right)$
are as follows:
\[
b_{k}=\lim_{s\rightarrow-\lambda_{k}}\left(s+\lambda_{k}\right)F\left(s\right)=-\dfrac{\lambda^{-2}}{G^{\prime}\left(-\lambda_{k}\right)}=
\]
\begin{equation}
=\dfrac{p^{\alpha}-1}{\left(p^{\alpha-1}-1\right)^{2}}\left[\sum_{i=1}^{\infty}\dfrac{p^{-\alpha i}}{\left(p^{-k-1}+p^{-k-1}\delta_{k}-p^{-i}\right)^{2}}\right]^{-1}.\label{b_k}
\end{equation}
Equation (\ref{b_k}) in view of asymptotic estimate for $\delta_{k}$
enables one to obtain an asymptotic estimate for $b_{k}$ as $k\rightarrow\infty$
(see Appendix~C, formula (\ref{b_gen})). This estimate can be written
as
\[
C_{\min}\left(\alpha\right)p^{-\left|\alpha-2\right|k}\left(1+\delta_{\alpha,2}\left(k^{-2}-1\right)\right)\left(1+o\left(1\right)\right)<
\]
\begin{equation}
<b_{k}<C_{\max}\left(\alpha\right)p^{-\left|\alpha-2\right|k}\left(1+\delta_{\alpha,2}\left(k^{-2}-1\right)\right)\left(1+o\left(1\right)\right),\label{b_gen-1}
\end{equation}
where $o\left(1\right)$ is an infinitesimal quantity as $k\rightarrow\infty$
and
\[
C_{\min}\left(\alpha\right)=\begin{cases}
\dfrac{\left(p^{\alpha}-1\right)\left(1-p^{-1}\right)^{2}\left(1-p^{-\alpha+2}\right)^{2}}{p^{2-\alpha}\left(p^{\alpha-1}-1\right)^{2}}, & \alpha>2,\\
\dfrac{\left(p^{2}-1\right)}{\left(p-1\right)^{2}}, & \alpha=2,\\
\dfrac{\left(p^{\alpha}-1\right)\delta_{\max}^{2}}{p^{2-\alpha}\left(p^{\alpha-1}-1\right)^{2}}, & 1<\alpha<2,
\end{cases}
\]
\[
C_{\max}\left(\alpha\right)=\left(1+o\left(1\right)\right)\begin{cases}
\dfrac{\left(p^{\alpha}-1\right)\left(1-p^{-\alpha+2}\right)^{2}}{p^{2-\alpha}\left(p^{\alpha-1}-1\right)^{2}}, & \alpha>2,\\
\dfrac{\left(1-p^{-2}\right)\left(1-p^{-1}\right)^{2}}{\left(p-1\right)^{2}}, & \alpha=2,\\
\dfrac{\left(p^{\alpha}-1\right)\delta_{\min}^{2}}{p^{2-\alpha}\left(p^{\alpha-1}-1\right)^{2}}, & 1<\alpha<2.
\end{cases}
\]
The function $S\left(t\right)$ is as follows:
\begin{equation}
S(t)=b_{0}+\sum\limits _{k=1}^{\infty}b_{k}\exp\left(-\lambda_{k}t\right).\label{S(t)}
\end{equation}
Using estimates for $b_{k}$ and for $\lambda_{k}$ one can estimate
$S(t)$ as
\[
b_{0}+C_{\min}\left(\alpha\right)\sum_{k}p^{-\left|\alpha-2\right|k}\left(1+\delta_{\alpha,2}\left(k^{-2}-1\right)\right)\exp\left(-wp^{-k}p^{-1}t\right)\left(1+o\left(1\right)\right)<
\]
\[
<S(t)<b_{0}+C_{\max}\left(\alpha\right)\sum_{k}p^{-\left|\alpha-2\right|k}\left(1+\delta_{\alpha,2}\left(k^{-2}-1\right)\right)\exp\left(-wp^{-k}p^{-1}t\right)\left(1+o\left(1\right)\right),
\]
where
\[
w=-bp^{\alpha}\Gamma_{p}(-\alpha).
\]
An application of Lemma \ref{lemma4} from Appendix~D gives, as $t\rightarrow\infty$,
\[
b_{0}+A\left(\alpha\right)t^{-\left|\alpha-2\right|}\left(1+\delta_{\alpha,2}\left(\left(\dfrac{\ln p}{\ln t}\right)^{2}-1\right)\right)\left(1+o\left(1\right)\right)<
\]
\begin{equation}
<S(t)<b_{0}+B\left(\alpha\right)t^{-\left|\alpha-2\right|}\left(1+\delta_{\alpha,2}\left(\left(\dfrac{\ln p}{\ln t}\right)^{2}-1\right)\right)\left(1+o\left(1\right)\right)\label{main_res}
\end{equation}
where $A\left(\alpha\right)$ and $B\left(\alpha\right)$ are defined
by
\[
A\left(\alpha\right)=C_{\min}\left(\alpha\right)p^{-\left|\alpha-2\right|}\left(\ln p\right)^{-1}\left(wp^{-1}\right)^{-\left|\alpha-2\right|}\Gamma\left(\left|\alpha-2\right|\right),
\]
\[
B\left(\alpha\right)=C_{\max}\left(\alpha\right)p^{-\left|\alpha-2\right|}\left(\ln p\right)^{-1}\left(wp^{-1}\right)^{-\left|\alpha-2\right|}\Gamma\left(\left|\alpha-2\right|\right)
\]
and $o\left(1\right)$ is an infinitesimal quantity as $t\rightarrow\infty$.

We note that
\[
b_{0}=\begin{cases}
\dfrac{\left(p^{\alpha}-1\right)\left(p^{\alpha-2}-1\right)}{\left(p^{\alpha-1}-1\right)^{2}} & \alpha>2,\\
0, & \alpha\leq2,
\end{cases}
\]
is nothing else but the equilibrium solution at the point $x=0$,
i.e.

\[
\lim_{t\rightarrow\infty}f\left(0,t\right)=\lim_{t\rightarrow\infty}S\left(t\right)=b_{0},
\]
which for $\alpha>2$ concides with the normalized stationary solution
$f^{(st)}\left(x\right)$ at the point $x=0$
\[
\lim_{t\rightarrow\infty}S\left(t\right)=b_{0}.
\]

Let us discuss the asymptotic behaviour of the distribution function
$f\left(x,t\right)$ as $t\rightarrow\infty$ in the domain $|x|_{p}>1$.
For the Laplace image of $f\left(x,t\right)$ we have
\begin{equation}
\widetilde{f}\left(x,s\right)=\dfrac{K\left(s,|x|_{p}\right)+G\left(s\right)}{1-\lambda G\left(s\right)},\label{f^til(x)}
\end{equation}
where the function $K\left(s,|x|_{p}\right)$ is defined as
\[
K\left(s,|x|_{p}\right)=-\dfrac{p^{\alpha}|x|_{p}^{-\alpha}}{\left(s-b|x|_{p}^{-1}p^{\alpha}\Gamma_{p}(-\alpha)\right)}-\sum_{i=1}^{\ln|x|_{p}/\ln p}\dfrac{p^{-\alpha i}\left(p^{\alpha}-1\right)}{\left(s-bp^{-i}p^{\alpha}\Gamma_{p}(-\alpha)\right)},
\]
and the function $G\left(s\right)$ is defined by formula (\ref{sol_G(s)}).
Clearly, the function (\ref{f^til(x)}) has a~pole at the point $s=0$,
at which the residue $b_{0}\left(x\right)$ is the equilibrium solution
of equation (\ref{GE-Omega}), which for $\alpha>2$ concides with
the normalized stationary solution $f^{(st)}\left(x\right)$. From
(\ref{f^til(x)}) it follows that the function $\widetilde{f}\left(x,s\right)$
also has poles $\lambda_{k}$, which coincide with the poles of the
function $F\left(s\right)$. The residues $b_{k}\left(x\right)$ at
these poles are as follows:
\begin{equation}
b_{k}\left(|x|_{p}\right)=\lim_{s\rightarrow-\lambda_{k}}\left(s+\lambda_{k}\right)\dfrac{K\left(s,|x|_{p}\right)+G\left(s\right)}{\left(1-\lambda G\left(s\right)\right)}.\label{b_k(x)}
\end{equation}
Formula (\ref{b_k(x)}) can be rewritten as
\[
b_{k}\left(|x|_{p}\right)=\left(h_{k}\left(|x|_{p}\right)+1\right)b_{k},
\]
where
\[
h_{k}\left(|x|_{p}\right)=\left(p^{\alpha}-1\right)\left(\dfrac{p^{\alpha}|x|_{p}^{-\alpha}}{\left(p^{-k-1}+p^{-k-1}\delta_{k}-|x|_{p}^{-1}\right)}+\sum_{i=1}^{\ln|x|_{p}/\ln p}\dfrac{p^{-\alpha i}\left(p^{\alpha}-1\right)}{\left(p^{-k-1}+p^{-k-1}\delta_{k}-p^{-i}\right)}\right)
\]
and $b_{k}$ is the residue at the pole $s=-\lambda_{k}$ of the function
$F\left(s\right)$. For any fixed $|x|_{p}$, the function $h_{k}\left(|x|_{p}\right)$
has finite nonzero limit as $k\rightarrow\infty$. Hence, as $k\rightarrow\infty$
the behaviour of the residues $b_{k}\left(|x|_{p}\right)$ and $b_{k}$
are equal up to a~factor depending on $|x|_{p}$. For this reason,
if $t\rightarrow\infty$ and for any $x\in\mathbb{Q}_{p}$ the function
$f\left(x,t\right)$ tends to the equilibrium solution in accordance
with the same law as the function $S\left(t\right)$.

\section{Conclusion}

In the present paper we consider the equation of  $p$-adic random
walk in a potential, which has the form (\ref{Gen_Eq}). It is shown
that in the case when the function of potential has a power form (\ref{u(x)})
and in the particular case when the parameters are related by (\ref{rel_a_b})
and (\ref{beta}), the equation (\ref{Gen_Eq}) can be solved analytically.
We found the solution to the Cauchy problem of this equation with
the initial condition in $\mathbb{Z}_{p}$ and examined its asymptotic
behavior at large times. Moreover, we prove that, as $t\rightarrow\infty$,
the distribution function tends to the stationary solution according
to a~law bounded from above and below by power laws with the same
exponent. The analysis of numerical solutions of the $p$-adic random
walk in a~potential \cite{Zubarev_PC}, which we not present here,
shows that, for other values of the parameter~$\beta$ near the value
$\beta=\alpha-1$, the power-law behaviour of the distribution function
is conserved for large times.

The results obtained
in this work provide a~basis for the belief that the introduction
of a~power-law potential in the $p$-adic model of the conformational
dynamics of protein conserves the power-law behaviour of the relaxation
curves for large times. This fact may serve as a~justification of
the conclusion that in a~model that is capable of describing experiments
in spectral diffusion in proteins and experiments in binding kinetics
of $CO$ by myoglobin (in which the relaxation has power-law behaviour),
the introduction of a~power-law potential may not be necessary. We
admit, however, that in more precise experiments on the conformational
dynamics of protein the potential may somehow exhibit itself. It is
quite possible that the potential may play a~certain role in models
of prototypes of molecular nanomachines that are based on processes
of $p$-adic random walk. One of such models, which was considered
in~\cite{ABZ_2011}, was based on a~process of $p$-adic random
walk without potential, but on the compact set $B_{r}\subset\mathbb{Q}_{p}$.
Such a~model can be physically interpreted as a~model on~$\mathbb{Q}_{p}$
in a~potential $u(x)$ of a~special form, which is constant for
$x\in B_{r}$ and is infinite for $x\notin B_{r}$.

\vspace{5mm}
{\bf Acknowledgements}

The author is deeply indebted to Prof.\ I.\,V.~Volovich (Steklov Mathematical Institute,
Russian Academy of Sciences) for their  careful reading of the manuscript, discussion of the results, and a~number of useful comments.
The author is also grateful to Prof.\ A.\,R.~Alimov (Faculty of Mechanics and Mathematics,
Moscow State University) for his assistance in preparing the manuscript  and a~number of helpful comments.

\smallskip

The work was partially supported by the Ministry of Education and Science of of the Russian Federation under the Competitiveness Enhancement Program of SSAU for 2013--2020.

\vspace{5mm}

\section*{Appendix~A. Some auxiliary results}

Let $u\left(x\right)$ is positive $d_{p}x$-locally integrable function
on $\mathbb{Q}_{p}$. We denote by $L_{norm}^{2}\left(\mathbb{Q}_{p},\:u\left(x\right)d_{p}x\right)$
the class of functions on~$\mathbb{Q}_{p}$ of the form $f\left(x\right)=f\left(|x|_{p}\right)$
with finite integral
\[
\intop_{\mathbb{Q}_{p}}d_{p}yu\left(y\right)\left|f\left(y\right)\right|^{2}.
\]

\begin{theorem}The family of functions
\begin{equation}
\phi_{i}\left(x\right)=\left(V_{i-1}\left(1-\dfrac{V_{i-1}}{V_{i}}\right)\right)^{-\tfrac{1}{2}}\left(\varOmega\left(|x|_{p}p^{-i+1}\right)-\dfrac{V_{i-1}}{V_{i}}\varOmega\left(|x|_{p}p^{-i}\right)\right),\label{basis_f}
\end{equation}
where $i\in\mathbb{Z}$ and
\begin{equation}
V_{i}=\intop_{\mathbb{Q}_{p}}m\left(y\right)d_{p}y\varOmega\left(|y|_{p}p^{-i}\right),\label{V_i_gen}
\end{equation}
forms an orthonormal basis for $L_{norm}^{2}\left(\mathbb{Q}_{p},\:u\left(x\right)d_{p}x\right)$.\label{th1}\end{theorem}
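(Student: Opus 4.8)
The plan is to verify orthonormality by a direct computation on the two nested balls and then to establish completeness by showing that the orthogonal complement of the system is trivial. Throughout write $B_i=\{x:|x|_p\le p^i\}$, so that $\Omega(|x|_p p^{-i})=\mathbf{1}_{B_i}$ and, by (\ref{V_i_gen}), $V_i=\int_{B_i}u(y)\,d_py$ is the $u\,d_py$-measure of $B_i$; since $u$ is positive, $V_i>0$ and the annulus $S_i=B_i\setminus B_{i-1}$ has positive measure $w_i:=V_i-V_{i-1}>0$. First I would rewrite
\[
\phi_i=\left(V_{i-1}\Bigl(1-\tfrac{V_{i-1}}{V_i}\Bigr)\right)^{-1/2}\left(\mathbf{1}_{B_{i-1}}-\tfrac{V_{i-1}}{V_i}\mathbf{1}_{B_i}\right),
\]
and record the two facts that drive the whole argument: $\phi_i$ is supported in $B_i$, and it has vanishing mean, $\int_{\mathbb{Q}_p}\phi_i\,u\,d_py=V_{i-1}-\tfrac{V_{i-1}}{V_i}V_i=0$.

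For orthonormality I would split into the diagonal and off-diagonal cases. On $B_{i-1}$ the bracket above equals $\tfrac{V_i-V_{i-1}}{V_i}$ and on $S_i$ it equals $-\tfrac{V_{i-1}}{V_i}$, so integrating its square against $u\,d_py$ gives $\bigl(\tfrac{V_i-V_{i-1}}{V_i}\bigr)^2 V_{i-1}+\bigl(\tfrac{V_{i-1}}{V_i}\bigr)^2 w_i=\tfrac{V_{i-1}w_i}{V_i}$; multiplying by the squared prefactor $\bigl(V_{i-1}(1-V_{i-1}/V_i)\bigr)^{-1}$ yields $\|\phi_i\|=1$. For $i<j$ the support of $\phi_i$ lies in $B_i\subseteq B_{j-1}$, and on $B_{j-1}$ the function $\phi_j$ is a constant; hence $\langle\phi_i,\phi_j\rangle$ is that constant times $\int_{\mathbb{Q}_p}\phi_i\,u\,d_py=0$, which gives orthogonality.

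For completeness it suffices to show that any $f\in L^2_{norm}(\mathbb{Q}_p,u\,d_py)$ with $\langle f,\phi_i\rangle=0$ for all $i$ vanishes. Set $M_n=\int_{B_n}f\,u\,d_py$, finite by Cauchy--Schwarz since $V_n<\infty$. Orthogonality to $\phi_i$ is equivalent to $M_{i-1}=\tfrac{V_{i-1}}{V_i}M_i$, i.e.\ $M_{i-1}/V_{i-1}=M_i/V_i$; thus the ball-averages $M_n/V_n$ are independent of $n$, say equal to $c$. Because $f$ is radial with value $c_n$ on $S_n$, one has $M_n-M_{n-1}=c_n w_n$, while $M_n-M_{n-1}=c(V_n-V_{n-1})=c\,w_n$; since $w_n>0$ this forces $c_n=c$ for every $n$, so $f$ equals the constant $c$ almost everywhere. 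As $\int_{\mathbb{Q}_p}u\,d_py=\lim_{n\to\infty}V_n=\infty$, the only constant belonging to $L^2_{norm}$ is $0$, whence $f=0$. Being a countable orthonormal system with trivial orthogonal complement, $\{\phi_i\}$ is then an orthonormal basis.

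I expect the completeness step to be the main obstacle, and within it the final point to be decisive: the averaging computation shows that a function orthogonal to all $\phi_i$ must be constant, but this only yields $f=0$ because the total $u$-mass diverges. If instead $\int_{\mathbb{Q}_p}u\,d_py<\infty$, every nonzero constant would be orthogonal to all $\phi_i$ and the family would have to be enlarged by the normalized constant function to remain a basis; the divergence $V_n\to\infty$, which holds for the power-law measure (\ref{V_i}) used in Section~4, is precisely what makes the stated family complete.
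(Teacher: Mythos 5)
Your proof is correct, and its completeness half takes a genuinely different route from the paper's. The paper argues constructively: it notes that the indicators of spheres $\{|x|_p=p^r\}$ form a complete orthogonal system in $L_{norm}^{2}$ (radial functions are exactly those constant on spheres), writes each sphere indicator as a difference of two ball indicators, and then exhibits the explicit expansion $\varOmega\left(|x|_{p}p^{-r}\right)=V_{r}\sum_{i=r+1}^{\infty}V_{i-1}^{-\tfrac{1}{2}}\left(1-\tfrac{V_{i-1}}{V_{i}}\right)^{\tfrac{1}{2}}\phi_{i}\left(x\right)$ by showing the norm of the difference vanishes through a telescoping computation. You instead prove that the orthogonal complement of the system is trivial: orthogonality to every $\phi_i$ forces all ball averages $M_n/V_n$ to coincide, radiality then forces $f$ to be a.e.\ constant, and divergence of the total mass kills the constant. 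Your argument is shorter and isolates exactly where completeness can fail; the paper's argument has the practical advantage of producing the expansion coefficients of ball indicators explicitly, and that formula (\ref{Omega}) is not decoration --- it is reused later (e.g.\ in Corollary \ref{cor1} and in deriving (\ref{k_ge_1})), so the paper gets it for free inside the completeness proof. Your orthonormality computation is the direct verification the paper merely asserts.

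Your closing observation is the most valuable part and deserves emphasis: the condition $V_n\rightarrow\infty$ (equivalently $\int_{\mathbb{Q}_{p}}u\left(y\right)d_{p}y=\infty$) is nowhere stated in the theorem's hypotheses (Appendix~A assumes only that $u$ is positive and locally integrable), yet it is indispensable. Since every $\phi_i$ has zero mean against $u\,d_{p}y$, a finite total mass would place the nonzero constants inside $L_{norm}^{2}$ and orthogonal to the whole family, so the theorem as literally stated would be false. The paper's own proof needs the same assumption silently: its telescoping sum equals $V_{r}^{-1}-\lim_{N\rightarrow\infty}V_{N}^{-1}$, so the quantity claimed to vanish is actually $V_{r}^{2}\lim_{N\rightarrow\infty}V_{N}^{-1}$, which is zero only when $V_{N}\rightarrow\infty$. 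That divergence does hold for the measures the paper actually uses (cf.\ (\ref{V_i})), but flagging it explicitly, as you do, is a genuine improvement in rigor over the paper's treatment.
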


\begin{proof} That the system of functions (\ref{basis_f}) is orthonormal
\[
\intop_{\mathbb{Q}_{p}}\phi_{i}\left(x\right)\phi_{j}\left(x\right)m\left(x\right)d_{p}x=\delta_{ij}
\]
is verified directly. Let us show that (\ref{basis_f}) is a~basis
for $L_{norm}^{2}\left(\mathbb{Q}_{p},\:u\left(x\right)d_{p}x\right)$.
Clearly, the countable family of the supports of all $p$-adic spheres
\[
S_{r}=\left\{ x\in\mathbb{Q}_{p}:\:|x|_{p}=p^{r}\right\}
\]
of the form
\[
\delta\left(|x|_{p}-p^{r}\right)=\begin{cases}
1, & x\in S_{r},\\
0, & x\notin S_{r}
\end{cases}
\]
forms a basis for $L_{norm}^{2}\left(\mathbb{Q}_{p},\:u\left(x\right)d_{p}x\right)$.
We have $\delta\left(|x|_{p}-p^{r}\right)=\varOmega\left(|x|_{p}p^{-r}\right)-\varOmega\left(|x|_{p}p^{-r+1}\right)$,
and so it suffices to show that, for any~$r$, the function $\varOmega\left(|x|_{p}p^{-r}\right)$
can be expanded in functions~(\ref{basis_f}). We claim that the
norm of the following function is zero. Indeed,
\[
\left\Vert \varOmega\left(|x|_{p}p^{-r}\right)-V_{r}\sum_{i=r+1}^{\infty}V_{i-1}^{-\tfrac{1}{2}}\left(1-\dfrac{V_{i-1}}{V_{i}}\right)^{\tfrac{1}{2}}\phi_{i}\left(x\right)\right\Vert ^{2}=
\]
\[
=\intop_{\mathbb{Q}_{p}}\varOmega\left(|x|_{p}p^{-r}\right)m\left(x\right)d_{p}x-
\]
\[
-2V_{r}\sum_{i=r+1}^{\infty}V_{i-1}^{-\tfrac{1}{2}}\left(1-\dfrac{V_{i-1}}{V_{i}}\right)^{\tfrac{1}{2}}\intop_{\mathbb{Q}_{p}}\varOmega\left(|x|_{p}p^{-r}\right)\phi_{i}\left(x\right)m\left(x\right)d_{p}x+
\]
\[
+V_{r}^{2}\sum_{i=r+1}^{\infty}V_{i-1}^{-1}\left(1-\dfrac{V_{i-1}}{V_{i}}\right)\intop_{\mathbb{Q}_{p}}\left(\phi_{i}\left(x\right)\right)^{2}m\left(x\right)d_{p}x=
\]
\[
=V_{r}-2V_{r}\sum_{i=r+1}^{\infty}V_{i-1}^{-\tfrac{1}{2}}\left(1-\dfrac{V_{i-1}}{V_{i}}\right)^{\tfrac{1}{2}}V_{i-1}^{-\tfrac{1}{2}}V_{r}\left(1-\dfrac{V_{i-1}}{V_{i}}\right)^{\tfrac{1}{2}}+
\]
\[
+V_{r}^{2}\sum_{i=r+2}^{\infty}V_{i-1}^{-1}\left(1-\dfrac{V_{i-1}}{V_{i}}\right)=
\]
\[
=V_{r}-V_{r}^{2}\left(\sum_{i=r+1}^{\infty}V_{i-1}^{-1}-\sum_{i=r+2}^{\infty}V_{i}^{-1}\right)=0
\]
As a result, we have
\begin{equation}
\varOmega\left(|x|_{p}p^{-r}\right)=V_{r}\sum_{i=r+1}^{\infty}V_{i-1}^{-\tfrac{1}{2}}\left(1-\dfrac{V_{i-1}}{V_{i}}\right)^{\tfrac{1}{2}}\phi_{i}\left(x\right),\label{Omega}
\end{equation}
completing the proof of the theorem.\end{proof}

Any function $f\left(x\right)$ from $L_{norm}^{2}\left(\mathbb{Q}_{p},\:u\left(x\right)d_{p}x\right)$
can be expanded as
\[
f\left(x\right)=\sum_{i=-\infty}^{\infty}f_{i}\phi_{i}\left(x\right),
\]
where
\[
f_{i}=\intop_{\mathbb{Q}_{p}}\phi_{i}\left(x\right)f\left(x\right)m\left(x\right)d_{p}x.
\]
We note that functions (\ref{basis_f}) have the property
\[
\intop_{\mathbb{Q}_{p}}\phi_{i}\left(x\right)m\left(x\right)d_{p}x=0.
\]

The following result holds \cite{BZ_2}.

\begin{theorem} Functions \eqref{basis_f} are eigenfunctions of
the operator that generalizes the Vladimirov operator in $L_{a}^{2}\left(\mathbb{Q}_{p},\:m\left(x\right)d_{p}x\right)$
\begin{equation}
D_{m(x)}^{\alpha}f\left(x\right)=\intop_{\mathbb{Q}_{p}}d_{p}ym\left(y\right)\frac{f\left(y\right)-f\left(x\right)}{\left\vert x-y\right\vert _{p}^{\alpha+1}}\label{Vlad_Q_p_gen}
\end{equation}
with the eigenvalues
\begin{equation}
\lambda_{i}=-\left(1-p^{-\left(\alpha+1\right)}\right)\sum_{j=i}^{\infty}p^{-j\left(\alpha+1\right)}V_{j}..\label{E_V}
\end{equation}
\label{th2}\end{theorem}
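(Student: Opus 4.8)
The plan is to reduce everything to one computation---the action of $D_{m(x)}^{\alpha}$ on the indicator of a ball---and then assemble $\phi_i$ from two such indicators by linearity. Write $g_n(x)=\Omega(|x|_p p^{-n})$ for the indicator of the ball $B_n=\{|x|_p\le p^n\}$, so that by (\ref{basis_f}) we have $\phi_i = N_i\bigl(g_{i-1}-\tfrac{V_{i-1}}{V_i}g_i\bigr)$ with $N_i=(V_{i-1}(1-V_{i-1}/V_i))^{-1/2}$. Since each $g_n$ is locally constant, the numerator $g_n(y)-g_n(x)$ vanishes in a neighbourhood of $y=x$, so there is no singularity on the diagonal and the defining integral (\ref{Vlad_Q_p_gen}) converges at infinity under the standing growth hypotheses on $m$.

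First I would compute $D_{m(x)}^{\alpha}g_n$ directly, splitting into two cases and exploiting the strong triangle inequality. If $x\in B_n$, then $g_n(x)=1$ and the integrand is supported on $|y|_p>p^n$, where ultrametricity forces $|x-y|_p=|y|_p$; decomposing this region into spheres $|y|_p=p^j$ for $j>n$ and using $\int_{|y|_p=p^j}m(y)\,d_py = V_j-V_{j-1}$ gives the constant $D_{m(x)}^\alpha g_n(x) = -c_n$ with $c_n=\sum_{j=n+1}^{\infty}p^{-j(\alpha+1)}(V_j-V_{j-1})$. If instead $|x|_p=p^m>p^n$, then $g_n(x)=0$, the integrand is supported on $B_n$ where $|x-y|_p=|x|_p$, and one obtains $D_{m(x)}^\alpha g_n(x)=V_n|x|_p^{-(\alpha+1)}$. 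This piecewise formula is the only substantial computation in the proof.

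Next I would substitute this into $\phi_i=N_i(g_{i-1}-\tfrac{V_{i-1}}{V_i}g_i)$ and evaluate on the three regions on which $\phi_i$ is constant. On $\{|x|_p>p^i\}$ both indicators see $x$ as ``outside,'' the two $V_n|x|_p^{-(\alpha+1)}$ contributions cancel against the weight $V_{i-1}/V_i$, and $D_{m(x)}^\alpha\phi_i=0=\lambda_i\phi_i$ automatically. On the sphere $\{|x|_p=p^i\}$, where $\phi_i=-N_iV_{i-1}/V_i$, the first indicator contributes the ``outside'' value $V_{i-1}p^{-i(\alpha+1)}$ and the second the constant $-c_i$; matching $D_{m(x)}^\alpha\phi_i$ to $\lambda_i\phi_i$ pins down $\lambda_i = -p^{-i(\alpha+1)}V_i-c_i$, and a short summation-by-parts (absorbing $p^{-i(\alpha+1)}V_i$ into the tail sum defining $c_i$ and factoring out $1-p^{-(\alpha+1)}$) turns this into exactly the claimed expression (\ref{E_V}).

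Finally, on the inner ball $\{|x|_p\le p^{i-1}\}$, where $\phi_i=N_i(1-V_{i-1}/V_i)$, both indicators contribute constants, namely $-c_{i-1}$ and $-c_i$, and I would verify $D_{m(x)}^\alpha\phi_i=\lambda_i\phi_i$ using the telescoping identity $c_{i-1}-c_i=p^{-i(\alpha+1)}(V_i-V_{i-1})$; this makes the inner-ball equation an algebraic consequence of the value of $\lambda_i$ already fixed on the sphere, so no new condition arises. The only thing to watch is the shifted index in the tail sums for $c_n$; the genuinely load-bearing step is the first one, where the ultrametric property collapses $|x-y|_p$ to a single norm and renders the ball integrals explicit.
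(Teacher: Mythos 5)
Your proof is correct and complete. One point of comparison is moot: the paper does not actually prove Theorem~\ref{th2} --- it is stated as a result imported from \cite{BZ_2} (``The following result holds'') --- so your argument supplies a proof that the paper itself omits. Your route is the natural direct one and every step checks out: the two-case formula $D_{m(x)}^{\alpha}g_n(x)=-c_n$ for $|x|_p\le p^n$ and $D_{m(x)}^{\alpha}g_n(x)=V_n|x|_p^{-(\alpha+1)}$ for $|x|_p>p^n$, with $c_n=\sum_{j=n+1}^{\infty}p^{-j(\alpha+1)}\left(V_j-V_{j-1}\right)$, follows from ultrametricity exactly as you say; the cancellation on $\{|x|_p>p^i\}$ is immediate; the sphere $\{|x|_p=p^i\}$ forces $\lambda_i=-p^{-i(\alpha+1)}V_i-c_i$, which Abel summation converts into \eqref{E_V}; and the inner-ball identity reduces, via the telescoping relation $c_{i-1}-c_i=p^{-i(\alpha+1)}\left(V_i-V_{i-1}\right)$, to the same value of $\lambda_i$, so the three regions are mutually consistent and no extra condition appears. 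The only thing you should make explicit is the standing integrability assumption $\sum_{j\ge i}p^{-j(\alpha+1)}V_j<\infty$ (equivalently, finiteness of the eigenvalues \eqref{E_V}), which is what justifies the convergence of the defining integral at infinity and of the tail sums $c_n$; it is implicit in the statement of the theorem, but worth stating since Appendix~A assumes only that the weight is positive and locally integrable.
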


\begin{lemma}The following formula holds
\[
\intop_{\mathbb{Q}_{p}}\phi_{i}\left(x\right)\phi_{j}\left(x\right)\phi_{k}\left(x\right)m\left(x\right)d_{p}x=
\]
\[
=\delta_{ij}\delta_{jk}V_{k-1}^{-\tfrac{1}{2}}\left(1-\dfrac{V_{k-1}}{V_{k}}\right)^{-\tfrac{3}{2}}\left(1-3\dfrac{V_{k-1}}{V_{k}}+2\dfrac{V_{k-1}^{2}}{V_{k}^{2}}\right)+
\]
\[
+\delta_{ij}\delta_{i<k}V_{k-1}^{-\tfrac{1}{2}}\left(1-\dfrac{V_{k-1}}{V_{k}}\right)^{\tfrac{1}{2}}+\delta_{ik}\delta_{i<j}V_{j-1}^{-\tfrac{1}{2}}\left(1-\dfrac{V_{j-1}}{V_{j}}\right)^{\tfrac{1}{2}}+
\]
\[
+\delta_{jk}\delta_{j<i}V_{i-1}^{-\tfrac{1}{2}}\left(1-\dfrac{V_{i-1}}{V_{i}}\right)^{\tfrac{1}{2}},
\]
where
\[
\delta_{j<i}=\begin{cases}
1, & j<i,\\
0, & j\geq i.
\end{cases}
\]
\label{lemma2} \end{lemma}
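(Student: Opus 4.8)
The plan is to exploit that each $\phi_{i}$ is a two-term combination of indicators of the nested balls $\varOmega(|x|_{p}p^{-i+1})$ and $\varOmega(|x|_{p}p^{-i})$, so that $\phi_{i}\phi_{j}\phi_{k}$ is a finite sum of products of three such indicators and nestedness collapses each product to the indicator of the ball of smallest index. Throughout I would set $r_{i}=V_{i-1}/V_{i}$ and $c_{i}=(V_{i-1}(1-r_{i}))^{-1/2}$, so that $\phi_{i}=c_{i}(\varOmega(|x|_{p}p^{-i+1})-r_{i}\varOmega(|x|_{p}p^{-i}))$. First I would record the building blocks used repeatedly: $\intop_{\mathbb{Q}_{p}}\varOmega(|x|_{p}p^{-i})m(x)d_{p}x=V_{i}$ by definition (\ref{V_i_gen}); $\intop_{\mathbb{Q}_{p}}\phi_{i}(x)m(x)d_{p}x=0$, the property noted after Theorem~\ref{th1}; and $\intop_{\mathbb{Q}_{p}}\phi_{i}(x)^{2}m(x)d_{p}x=1$, the orthonormality from Theorem~\ref{th1}. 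I would also note that whenever $\ell\le j-1$ the function $\phi_{j}$ is constant on the ball $\{|x|_{p}\le p^{\ell}\}$, taking there the value $c_{j}(1-r_{j})$, because that ball lies inside the support $\{|x|_{p}\le p^{j-1}\}$ of the first indicator in $\phi_{j}$.

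Since the integrand $\phi_{i}\phi_{j}\phi_{k}\,m$ is symmetric under permutations of $(i,j,k)$, I would organize the computation by the coincidence pattern of the indices. If $i,j,k$ are pairwise distinct and (say) the smallest index is $i$, then on the support of $\phi_{i}$ the factors $\phi_{j},\phi_{k}$ are both constant, so the integral equals that constant times $\intop\phi_{i}m=0$; this explains the absence of an all-distinct term. If exactly two indices coincide, there are two subcases. When the repeated pair is \emph{smaller} than the single index---the situation selected by the Kronecker factors $\delta_{ij}\delta_{i<k}$, $\delta_{ik}\delta_{i<j}$, $\delta_{jk}\delta_{j<i}$---the squared factor is supported on a ball on which the remaining $\phi$ of larger index is constant; pulling out that constant leaves $\intop\phi^{2}m=1$, and the constant equals $c_{k}(1-r_{k})=V_{k-1}^{-1/2}(1-V_{k-1}/V_{k})^{1/2}$ (and its analogues), reproducing the middle three terms. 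When instead the repeated pair is \emph{larger} than the single index, the single small-index factor is supported on a ball on which the squared large-index factor is constant, so the integral reduces to a constant times $\intop\phi\,m=0$ and vanishes, consistent with the formula having no such term.

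The only genuinely computational case is $i=j=k$, giving the first term. Here I would split the support of $\phi_{i}$ into the inner ball $\{|x|_{p}\le p^{i-1}\}$, of $m$-measure $V_{i-1}$, where $\phi_{i}=c_{i}(1-r_{i})$, and the annulus $\{|x|_{p}=p^{i}\}$, of $m$-measure $V_{i}-V_{i-1}=V_{i}(1-r_{i})$, where $\phi_{i}=-c_{i}r_{i}$. Then
\[
\intop_{\mathbb{Q}_{p}}\phi_{i}^{3}\,m\,d_{p}x=c_{i}^{3}\left((1-r_{i})^{3}V_{i-1}-r_{i}^{3}V_{i}(1-r_{i})\right)=c_{i}^{3}r_{i}V_{i}(1-r_{i})\left((1-r_{i})^{2}-r_{i}^{2}\right),
\]
and using $(1-r_{i})^{2}-r_{i}^{2}=1-2r_{i}$ together with $c_{i}^{3}r_{i}V_{i}(1-r_{i})=V_{i-1}^{-1/2}(1-r_{i})^{-1/2}$ (since $r_{i}V_{i}=V_{i-1}$) this equals $V_{i-1}^{-1/2}(1-r_{i})^{-1/2}(1-2r_{i})$. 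Finally the identity $(1-r_{i})(1-2r_{i})=1-3r_{i}+2r_{i}^{2}$ lets me rewrite it as $V_{i-1}^{-1/2}(1-r_{i})^{-3/2}(1-3r_{i}+2r_{i}^{2})$, which is exactly the first term. The main obstacle is purely bookkeeping: keeping straight the asymmetric roles of the smaller and larger indices, so that the pair-smaller case contributes while the superficially similar pair-larger case vanishes, and carrying the $c_{i},r_{i}$ algebra cleanly through the cubic term; no analytic difficulty arises beyond the nestedness of $p$-adic balls.
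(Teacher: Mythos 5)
Your proposal is correct: the reduction of every mixed-index case to the nestedness facts $\int_{\mathbb{Q}_{p}}\phi_{i}\,m\,d_{p}x=0$ and $\int_{\mathbb{Q}_{p}}\phi_{i}^{2}\,m\,d_{p}x=1$ checks out, and the cubic case computes correctly since $c_{i}^{3}r_{i}V_{i}(1-r_{i})=V_{i-1}^{-1/2}(1-r_{i})^{-1/2}$ and $(1-r_{i})(1-2r_{i})=1-3r_{i}+2r_{i}^{2}$ reproduce the stated first term. The paper gives no details at all (it merely asserts the lemma ``is proved by direct calculation''), and your argument is exactly such a direct calculation, so it is essentially the same approach, just carried out explicitly.
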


The lemma is is proved by direct calculation.

\begin{theorem} Let $f\left(x\right)$, $g\left(x\right)\in L_{norm}^{2}\left(\mathbb{Q}_{p},\:u\left(x\right)d_{p}x\right)$,
\[
f\left(x\right)=\sum_{i=-\infty}^{\infty}f_{i}\phi_{i}\left(x\right),\;g\left(x\right)=\sum_{i=-\infty}^{\infty}g_{i}\phi_{i}\left(x\right).
\]
Then
\[
f\left(x\right)g\left(x\right)=\sum_{k=-\infty}^{\infty}h_{k}\phi_{k}\left(x\right),
\]
where
\[
h_{k}=f_{k}g_{k}V_{k-1}^{-\tfrac{1}{2}}\left(1-\dfrac{V_{k-1}}{V_{k}}\right)^{-\tfrac{3}{2}}\left(1-3\dfrac{V_{k-1}}{V_{k}}+2\dfrac{V_{k-1}^{2}}{V_{k}^{2}}\right)+
\]
\[
+V_{k-1}^{-\tfrac{1}{2}}\left(1-\dfrac{V_{k-1}}{V_{k}}\right)^{\tfrac{1}{2}}\sum_{i=-\infty}^{k-1}f_{i}g_{i}+f_{k}\sum_{i=k+1}^{\infty}g_{i}V_{i-1}^{-\tfrac{1}{2}}\left(1-\dfrac{V_{i-1}}{V_{i}}\right)^{\tfrac{1}{2}}+
\]
\[
+g_{k}\sum_{i=k+1}^{\infty}f_{i}V_{i-1}^{-\tfrac{1}{2}}\left(1-\dfrac{V_{i-1}}{V_{i}}\right)^{\tfrac{1}{2}}.
\]
\label{th3} \end{theorem}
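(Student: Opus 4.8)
The plan is to compute the expansion coefficients $h_k$ of the product $fg$ directly, reducing them to the triple-product integrals already evaluated in Lemma~\ref{lemma2}. Since by Theorem~\ref{th1} the family $\{\phi_k\}$ is an orthonormal basis of $L^2_{norm}\left(\mathbb{Q}_p,\,m(x)d_px\right)$, the coefficient of $\phi_k$ in the expansion of $fg$ is
\[
h_k = \intop_{\mathbb{Q}_p} f(x)g(x)\phi_k(x)m(x)d_px.
\]
Note first that functions in the space are of the form $f(x)=f(|x|_p)$, so their pointwise product is again of this form; granting that $fg$ lies in the space, I would substitute the given expansions $f=\sum_i f_i\phi_i$ and $g=\sum_j g_j\phi_j$ and interchange summation with integration to obtain
\[
h_k = \sum_{i=-\infty}^{\infty}\sum_{j=-\infty}^{\infty} f_i g_j \intop_{\mathbb{Q}_p}\phi_i(x)\phi_j(x)\phi_k(x)m(x)d_px.
\]

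The next step is to insert the explicit value of the triple integral from Lemma~\ref{lemma2}. That value is a sum of four terms, each carrying a product of Kronecker deltas that collapses the double sum over $(i,j)$ to a single sum. The term proportional to $\delta_{ij}\delta_{jk}$ contributes only the diagonal $i=j=k$, producing $f_k g_k$ times the factor $V_{k-1}^{-1/2}\left(1-V_{k-1}/V_k\right)^{-3/2}\left(1-3V_{k-1}/V_k+2V_{k-1}^2/V_k^2\right)$. The term with $\delta_{ij}\delta_{i<k}$ forces $i=j<k$ and yields $V_{k-1}^{-1/2}\left(1-V_{k-1}/V_k\right)^{1/2}\sum_{i<k}f_i g_i$. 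The term with $\delta_{ik}\delta_{i<j}$ forces $i=k<j$, giving $f_k\sum_{j>k}g_j V_{j-1}^{-1/2}\left(1-V_{j-1}/V_j\right)^{1/2}$, and symmetrically the term with $\delta_{jk}\delta_{j<i}$ gives $g_k\sum_{i>k}f_i V_{i-1}^{-1/2}\left(1-V_{i-1}/V_i\right)^{1/2}$. Summing these four contributions reproduces exactly the claimed formula for $h_k$, so the bulk of the proof is bookkeeping on the index constraints.

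The only nonroutine point, and the step I expect to be the main obstacle, is justifying the interchange of the double summation with the integration, together with the convergence of the resulting single sums. Since $f,g\in L^2_{norm}$, the coefficient sequences $(f_i)$ and $(g_j)$ are square-summable and the partial sums of their expansions converge in norm. For fixed $k$ the triple-integral kernel of Lemma~\ref{lemma2} is, by its deltas, supported on indices equal to $k$ or strictly below $k$ on the one hand, and on the tails $i>k$ on the other; the tail weights $V_{i-1}^{-1/2}\left(1-V_{i-1}/V_i\right)^{1/2}$ are, up to the constant $V_k$, precisely the expansion coefficients of the indicator $\varOmega\left(|x|_p p^{-k}\right)$ exhibited in formula~(\ref{Omega}), hence themselves square-summable over $i>k$. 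Absolute convergence of $\sum_{i<k}f_i g_i$ and of the two tail sums then follows from the Cauchy--Schwarz inequality, which legitimizes the rearrangement and completes the argument.
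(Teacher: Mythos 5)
Your proposal is correct and follows essentially the same route as the paper's own proof: expand $f g$ as the double sum $\sum_{i,j} f_i g_j \phi_i \phi_j$, identify $h_k$ with the double sum of $f_i g_j$ against the triple-product integrals, and invoke Lemma~\ref{lemma2} so that the Kronecker deltas collapse everything to the four stated terms. The only difference is that you additionally justify the interchange of summation and integration and the convergence of the resulting sums (via square-summability and Cauchy--Schwarz), a point the paper passes over in silence, so your write-up is if anything more complete.
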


\begin{proof} We write
\[
f\left(x\right)g\left(x\right)=\sum_{i=-\infty}^{\infty}\sum_{j=-\infty}^{\infty}f_{i}g_{j}\phi_{i}\left(x\right)\phi_{j}\left(x\right)=\sum_{k=-\infty}^{\infty}h_{k}\phi_{k}
\left(x\right),
\]
where
\[
h_{k}=\sum_{i=-\infty}^{\infty}\sum_{j=-\infty}^{\infty}f_{i}g_{j}\intop_{\mathbb{Q}_{p}}\phi_{i}\left(x\right)\phi_{j}\left(x\right)\phi_{k}\left(x\right)m\left(x\right)d_{p}x.
\]
Now the theorem follows from Lemma~\ref{lemma2}.\end{proof}

The following result is a direct corollary to Theorem \ref{th3} taking
into account the expansion formula

\[
\varOmega\left(|x|_{p}\right)=\sum_{i=0}^{\infty}\dfrac{V_{0}}{V_{i}}\phi_{k}\left(x\right).
\]

\begin{corollary} Let $f\left(x\right)\in L_{norm}^{2}\left(\mathbb{Q}_{p},\:u\left(x\right)d_{p}x\right)$
and $f\left(x\right)=\sum_{k=-\infty}^{\infty}f_{k}\phi_{k}\left(x\right)$.
Then the production of functions $\varOmega\left(|x|_{p}\right)f\left(x\right)$
is expanded in the basis \eqref{basis_f} as
\[
\varOmega\left(|x|_{p}\right)f\left(x\right)=\sum_{k=-\infty}^{\infty}C_{k}\phi_{k}\left(x\right),
\]
where
\[
C_{k}=f_{k}V_{0}V_{k-1}^{-1}\left(1-\dfrac{V_{k-1}}{V_{k}}\right)^{-1}\left(1-3\dfrac{V_{k-1}}{V_{k}}+2\dfrac{V_{k-1}^{2}}{V_{k}^{2}}\right)+
\]
\[
+f_{k}\sum_{i=k}^{\infty}V_{0}V_{i-1}^{-1}\left(1-\dfrac{V_{i-1}}{V_{i}}\right)-2f_{k}V_{0}V_{k-1}^{-1}\left(1-\dfrac{V_{k-1}}{V_{k}}\right)+
\]
\begin{equation}
+V_{0}V_{k-1}^{-\tfrac{1}{2}}\left(1-\dfrac{V_{k-1}}{V_{k}}\right)^{\tfrac{1}{2}}\sum_{i=1}^{\infty}f_{i}V_{i-1}^{-\tfrac{1}{2}}\left(1-\dfrac{V_{i-1}}{V_{i}}\right)^{\tfrac{1}{2}}\label{Omega_f_1}
\end{equation}
for $k\geq1$ and
\begin{equation}
C_{k}=f_{k}\sum_{i=k+1}^{\infty}V_{0}V_{i-1}^{-1}\left(1-\dfrac{V_{i-1}}{V_{i}}\right)\label{Omega_f_2}
\end{equation}
for $k<1$.\label{cor1}\end{corollary}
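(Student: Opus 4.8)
The plan is to read the statement as the special case $g(x)=\varOmega(|x|_p)$ of Theorem~\ref{th3}, so that essentially no new computation is needed beyond identifying the coefficients of $\varOmega$ and simplifying. First I would record the expansion of $\varOmega(|x|_p)$ in the basis \eqref{basis_f}: putting $r=0$ in formula \eqref{Omega} gives
\[
\varOmega(|x|_p)=\sum_{i=1}^{\infty}\omega_i\phi_i(x),\qquad \omega_i=V_0V_{i-1}^{-\tfrac12}\Bigl(1-\tfrac{V_{i-1}}{V_i}\Bigr)^{\tfrac12},
\]
with $\omega_i=0$ for $i\le 0$. The crucial feature is that the second factor in Theorem~\ref{th3} has this explicit one-sided coefficient sequence, so every sum over the second index that occurs in that theorem automatically truncates to $i\ge 1$.

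Next I would substitute $g_i=\omega_i$ into the four-term formula for $h_k$ in Theorem~\ref{th3} and read off $C_k$. For $k\ge 1$ the diagonal factor $g_k=\omega_k$ is nonzero, and the identity $\omega_kV_{k-1}^{-1/2}(1-V_{k-1}/V_k)^{-3/2}=V_0V_{k-1}^{-1}(1-V_{k-1}/V_k)^{-1}$ turns the first term of Theorem~\ref{th3} into the first term of \eqref{Omega_f_1}. In the term $f_k\sum_{i>k}g_iV_{i-1}^{-1/2}(1-V_{i-1}/V_i)^{1/2}$ the product $\omega_iV_{i-1}^{-1/2}(1-V_{i-1}/V_i)^{1/2}$ collapses to $V_0V_{i-1}^{-1}(1-V_{i-1}/V_i)$, yielding a tail $\sum_{i=k+1}^{\infty}$; and the two remaining terms, $V_{k-1}^{-1/2}(1-V_{k-1}/V_k)^{1/2}\sum_{i<k}f_i\omega_i$ and $g_k\sum_{i>k}f_iV_{i-1}^{-1/2}(1-V_{i-1}/V_i)^{1/2}$, share the common prefactor $V_0V_{k-1}^{-1/2}(1-V_{k-1}/V_k)^{1/2}$ and together supply the partial sums over $i\le k-1$ and $i\ge k+1$ of $f_iV_{i-1}^{-1/2}(1-V_{i-1}/V_i)^{1/2}$.

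The only work requiring care is the index bookkeeping that rewrites these sums into the symmetric form of \eqref{Omega_f_1}: replacing the tail $\sum_{i=k+1}^{\infty}$ by $\sum_{i=k}^{\infty}$ in the second term, and the punctured sum $\sum_{i\le k-1}+\sum_{i\ge k+1}$ by the full sum $\sum_{i=1}^{\infty}$ in the last term. Each such rewriting inserts a diagonal $i=k$ contribution equal to $f_kV_0V_{k-1}^{-1}(1-V_{k-1}/V_k)$, and the two resulting corrections combine into exactly the isolated summand $-2f_kV_0V_{k-1}^{-1}(1-V_{k-1}/V_k)$ of \eqref{Omega_f_1}; a short check that these inserted and subtracted diagonal terms cancel in pairs then identifies the rearranged expression with $h_k$, proving \eqref{Omega_f_1}. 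The case $k<1$ is immediate and shorter: $g_k=\omega_k=0$ kills both the diagonal term and the term $g_k\sum_{i>k}(\cdots)$, while in $\sum_{i\le k-1}f_i\omega_i$ every $\omega_i$ vanishes since $i\le k-1<1$, so only $f_k\sum_{i=k+1}^{\infty}V_0V_{i-1}^{-1}(1-V_{i-1}/V_i)$ survives, which is \eqref{Omega_f_2}. Convergence of all these series in $L^2_{norm}$ is inherited from the basis expansions of $f$ and $\varOmega$ guaranteed by Theorem~\ref{th1}, so no separate convergence argument is needed; the sole obstacle is organizing the diagonal-term bookkeeping without sign errors.
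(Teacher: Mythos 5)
Your route is the paper's own: the authors also obtain this corollary by specializing Theorem~\ref{th3} to $g(x)=\varOmega(|x|_p)$, and your use of \eqref{Omega} at $r=0$ (rather than the misprinted expansion displayed just before the corollary) gives the correct coefficients $\omega_i=V_0V_{i-1}^{-1/2}(1-V_{i-1}/V_i)^{1/2}$ for $i\ge1$ and $\omega_i=0$ for $i\le0$. Your case $k\ge1$ is handled correctly: the conversion identities you invoke are valid in the index ranges where you use them, and the two inserted $i=k$ diagonal terms are exactly cancelled by the summand $-2f_kV_0V_{k-1}^{-1}(1-V_{k-1}/V_k)$, which yields \eqref{Omega_f_1}.

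The case $k<1$ contains a genuine slip, at the words ``so only $f_k\sum_{i=k+1}^{\infty}V_0V_{i-1}^{-1}(1-V_{i-1}/V_i)$ survives''. What actually survives from Theorem~\ref{th3} is $f_k\sum_{i=k+1}^{\infty}\omega_iV_{i-1}^{-1/2}(1-V_{i-1}/V_i)^{1/2}$, and the conversion $\omega_iV_{i-1}^{-1/2}(1-V_{i-1}/V_i)^{1/2}=V_0V_{i-1}^{-1}(1-V_{i-1}/V_i)$ holds only for $i\ge1$: for $k+1\le i\le0$ the left-hand side is zero (since $\omega_i=0$) while the right-hand side is not. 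Carried out consistently, your computation therefore yields
\[
C_k=f_k\sum_{i=1}^{\infty}V_0V_{i-1}^{-1}\left(1-\dfrac{V_{i-1}}{V_i}\right),\qquad k<1,
\]
with lower limit $1$, not $k+1$; this agrees with \eqref{Omega_f_2} for $k=0$ but not for $k\le-1$. The version with lower limit $1$ is in fact the correct one: for $k\le0$ the function $\phi_k$ is supported in $\mathbb{Z}_p$, so $\varOmega(|x|_p)\phi_k(x)=\phi_k(x)$ and $C_k=f_k$ by orthonormality, consistent with the telescoping value $\sum_{i=1}^{\infty}V_0(V_{i-1}^{-1}-V_i^{-1})=1-V_0\lim_{i\to\infty}V_i^{-1}=1$ (the condition $V_i\to\infty$ is already implicit in the completeness argument of Theorem~\ref{th1}). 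In short, \eqref{Omega_f_2} as printed is itself erroneous for $k\le-1$, and your write-up reproduces the misprint rather than deriving it; this is immaterial for how the corollary is used in \eqref{k_l_1}--\eqref{til_f_l_1}, where only the proportionality of $C_k$ to $f_k$ matters, but as a proof of the literal statement your last step in the $k<1$ case does not follow from the computation that precedes it.
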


\section*{Appendix~B. Estimate of $\delta_{k}$ for large $k$}

To estimate $\delta_{k}$ for large $k$ it is convenient to write
equation (\ref{delta_k}) as
\begin{equation}
\dfrac{p^{-\left(\alpha-1\right)k}p^{-\alpha+1}}{\delta_{k}}=\sum_{i=1}^{k}\dfrac{p^{-\left(\alpha-1\right)i}}{1-p^{-k-1+i}\left(1+\delta_{k}\right)}-\sum_{i=k+2}^{\infty}\dfrac{p^{-\alpha i}p^{k+1}}{1-p^{-i+k+1}\left(1+\delta_{k}\right)}-\dfrac{1}{p^{\alpha-1}-1}.\label{delta_k_second}
\end{equation}
In what follows we shall require the following result. \begin{lemma}If
$0<x<1$, then
\begin{equation}
\dfrac{1}{1-x}>1+x.\label{eq_1}
\end{equation}
If $0<x<y<1$, then
\begin{equation}
\dfrac{1}{1-x}\leq1+cx,\label{eq_2}
\end{equation}
where $c=\dfrac{1}{1-y}$.\label{lemma3} \end{lemma}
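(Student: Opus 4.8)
The plan is to prove both inequalities by clearing the positive denominator $1-x$ and reducing to a manifestly nonnegative expression. For the first inequality, since $0<x<1$ gives $1-x>0$, I would multiply $\frac{1}{1-x}>1+x$ through by $1-x$, obtaining the equivalent statement $1>(1+x)(1-x)=1-x^{2}$, that is $x^{2}>0$. This holds because $x>0$, so the strict inequality (\ref{eq_1}) follows at once.

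For the second inequality I would again use $1-x>0$ to rewrite $\frac{1}{1-x}\le 1+cx$ as the equivalent $1\le(1+cx)(1-x)$. Expanding the right-hand side yields $1+(c-1)x-cx^{2}$, so the claim reduces to showing $0\le(c-1)x-cx^{2}=x\bigl[(c-1)-cx\bigr]$. Since $x>0$, it is enough to verify $(c-1)-cx\ge 0$, i.e.\ $cx\le c-1$.

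The key observation, which I would record first, is that the definition $c=\frac{1}{1-y}$ is exactly equivalent to the identity $c-1=cy$; indeed $c-1=\frac{1}{1-y}-1=\frac{y}{1-y}=cy$. Substituting this identity, the required inequality $cx\le c-1$ becomes $cx\le cy$, which, since $c>0$, is simply $x\le y$. This is guaranteed by the hypothesis $x<y$, and hence (\ref{eq_2}) holds.

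I expect no genuine obstacle here, since both parts are elementary algebra once the positive factor $1-x$ is cleared; the only mild point is recognizing the identity $c-1=cy$, which is merely a rephrasing of the definition of $c$. Geometrically this identity says that $1+cx$ is precisely the secant line of the convex function $\frac{1}{1-x}$ through the points $x=0$ and $x=y$, so (\ref{eq_2}) is an instance of the standard fact that a convex function lies below its chords; I would mention this as a one-line alternative justification.
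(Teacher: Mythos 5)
Your proof is correct and takes essentially the same route as the paper's: both parts reduce to elementary algebra via the positivity of $1-x$, the estimate $cx\le cy$ coming from $x<y$, and the defining identity of $c$ (your $c-1=cy$ is the paper's $c(1-y)=1$). The paper merely organizes it differently, exhibiting the differences as $\dfrac{1}{1-x}-(1+x)=\dfrac{x^{2}}{1-x}$ and $\dfrac{1}{1-x}-(1+cx)=\dfrac{(1-c)x+cx^{2}}{1-x}$ instead of clearing the denominator, so the distinction is purely cosmetic.
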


\begin{proof}
\[
\dfrac{1}{1-x}-\left(1+x\right)=\dfrac{x^{2}}{1-x}>0.
\]
\[
\dfrac{1}{1-x}-\left(1+cx\right)=\dfrac{\left(1-c\right)x+cx^{2}}{1-x}\leq x\dfrac{\left(1-c\right)+cy}{1-x}=x\dfrac{1-c\left(1-y\right)}{1-x}=\dfrac{1-1}{1-x}=0.
\]
\end{proof}

Consider the case $\alpha>2$. By (\ref{eq_1}) and (\ref{eq_2})
and Lemma~\ref{lemma3} with $\alpha\neq2$ we estimate the terms
on the right of~(\ref{delta_k_second}) as follows:
\[
\sum_{i=1}^{k}\dfrac{p^{-\left(\alpha-1\right)i}}{1-p^{-k-1+i}\left(1+\delta_{k}\right)}>\sum_{i=1}^{k}p^{-\left(\alpha-1\right)i}\left(1+p^{-k-1+i}\left(1+\delta_{k}\right)\right)=
\]
\[
=\dfrac{1-p^{-\left(\alpha-1\right)k}}{p^{\alpha-1}-1}+\left(1+\delta_{k}\right)\dfrac{1-p^{-\left(\alpha-2\right)k}}{p^{\alpha-2}-1}p^{-k-1},
\]
\[
\sum_{i=1}^{k}\dfrac{p^{-\left(\alpha-1\right)i}}{1-p^{-k-1+i}\left(1+\delta_{k}\right)}\leq\sum_{i=1}^{k}p^{-\left(\alpha-1\right)i}\left(1+\dfrac{1+\delta_{k}}{1-p^{-1}}p^{-k-1+i}\right)=
\]
\[
=\dfrac{1-p^{-\left(\alpha-1\right)k}}{p^{\alpha-1}-1}+\dfrac{1+\delta_{k}}{p-1}\dfrac{1-p^{-\left(\alpha-2\right)k}}{p^{\alpha-2}-1}p^{-k},
\]
Moreover, the following estimate holds:
\[
-\sum_{i=k+2}^{\infty}\dfrac{p^{-\alpha i}p^{k+1}}{1-p^{-i+k+1}\left(1+\delta_{k}\right)}>-\sum_{i=k+2}^{\infty}\dfrac{p^{-\alpha i}p^{k+1}}{1-p^{-1}\left(1+\delta_{k}\right)}=-\dfrac{p^{-\left(\alpha-1\right)k}p^{-2\alpha+1}}{\left(1-p^{-1}\right)\left(1-p^{-\alpha}\right)\left(1+\delta_{k}\right)},
\]
\[
-\sum_{i=k+2}^{\infty}\dfrac{p^{-\alpha i}p^{k+1}}{1-p^{-i+k+1}\left(1+\delta_{k}\right)}<-\sum_{i=k+2}^{\infty}p^{-\alpha i}p^{k+1}=-\dfrac{p^{-\left(\alpha-1\right)k}p^{-2\alpha+1}}{1-p^{-\alpha}}
\]
Applying these estimates, we have
\begin{equation}
\dfrac{p^{-\left(\alpha-1\right)k}p^{-\alpha+1}}{\delta_{k}}>-\dfrac{p^{-\left(\alpha-1\right)\left(k+1\right)}}{1-p^{-\alpha+1}}+\dfrac{\left(1+\delta_{k}\right)\left(1-p^{-\left(\alpha-2\right)k}\right)}{p^{\alpha-2}-1}p^{-k-1}-\dfrac{p^{-\left(\alpha-1\right)k}p^{-2\alpha+1}}{\left(1-p^{-1}\right)\left(1-p^{-\alpha}\right)\left(1+\delta_{k}\right)},\label{neq_1}
\end{equation}
\begin{equation}
\dfrac{p^{-\left(\alpha-1\right)k}p^{-\alpha+1}}{\delta_{k}}<-\dfrac{p^{-\left(\alpha-1\right)k}}{p^{\alpha-1}-1}+\dfrac{1+\delta_{k}}{p-1}\dfrac{1-p^{-\left(\alpha-2\right)k}}{p^{\alpha-2}-1}p^{-k}-\dfrac{p^{-\left(\alpha-1\right)k}p^{-2\alpha+1}}{1-p^{-\alpha}}.\label{neq_2}
\end{equation}
Next we need the following assumption on the behaviour of $\delta_{k}$
as $k\rightarrow\infty$. We make the ansatz that $\delta_{k}\rightarrow0$
as $k\rightarrow\infty$ and solve these inequalities with this ansatz.
Then, in the asymptotic limit $k\rightarrow\infty$, the last two
inequalities may be written as
\begin{equation}
\left(1-p^{-1}\right)\left(1-p^{-\alpha+2}\right)p^{-\left(\alpha-2\right)k}\left(1+o\left(1\right)\right)<\delta_{k}<\left(1-p^{-\alpha+2}\right)p^{-\left(\alpha-2\right)k}\left(1+o\left(1\right)\right),\:\alpha>2.\label{delta_1}
\end{equation}

Consider the case $\alpha=2$. In this case we estimate the terms
on the right of (\ref{delta_k_second}) as
\[
\sum_{i=1}^{k}\dfrac{p^{-\left(\alpha-1\right)i}}{1-p^{-k-1+i}\left(1+\delta_{k}\right)}>\sum_{i=1}^{k}p^{-i}\left(1+p^{-k-1+i}\left(1+\delta_{k}\right)\right)=
\]
\[
=\dfrac{1-p^{-k}}{p^{\alpha-1}-1}+\left(1+\delta_{k}\right)kp^{-k-1},
\]
\[
\sum_{i=1}^{k}\dfrac{p^{-\left(\alpha-1\right)i}}{1-p^{-k-1+i}\left(1+\delta_{k}\right)}\leq\sum_{i=1}^{k}p^{-i}\left(1+\dfrac{1+\delta_{k}}{1-p^{-1}}p^{-k-1+i}\right)=
\]
\[
=\dfrac{1-p^{-k}}{p^{\alpha-1}-1}+\dfrac{1+\delta_{k}}{p-1}kp^{-k},
\]
\[
-\sum_{i=k+2}^{\infty}\dfrac{p^{-\alpha i}p^{k+1}}{1-p^{-i+k+1}\left(1+\delta_{k}\right)}>-\sum_{i=k+2}^{\infty}\dfrac{p^{-2i}p^{k+1}}{1-p^{-1}\left(1+\delta_{k}\right)}=-\dfrac{p^{-k}p^{-3}}{\left(1-p^{-1}\right)\left(1-p^{-2}\right)\left(1+\delta_{k}\right)},
\]
\[
-\sum_{i=k+2}^{\infty}\dfrac{p^{-\alpha i}p^{k+1}}{1-p^{-i+k+1}\left(1+\delta_{k}\right)}<-\sum_{i=k+2}^{\infty}p^{-2i}p^{k+1}=-\dfrac{p^{-k}p^{-3}}{1-p^{-2}}.
\]
Using these estimates, we may write
\[
\dfrac{p^{-k}p^{-1}}{\delta_{k}}>-\dfrac{p^{-k-1}}{1-p^{-1}}+\left(1+\delta_{k}\right)kp^{-k-1}-\dfrac{p^{-k}}{\left(p-1\right)\left(p^{2}-1\right)\left(1+\delta_{k}\right)},
\]
\[
\dfrac{p^{-k}p^{-1}}{\delta_{k}}<-\dfrac{p^{-k}}{p^{\alpha-1}-1}+\dfrac{1+\delta_{k}}{p-1}kp^{-k}-\dfrac{p^{-k}p^{-3}}{1-p^{-2}}.
\]
We again make make the ansatz that $\delta_{k}\rightarrow0$ as $k\rightarrow\infty$.
With this ansatz the last two inequalities read as
\begin{equation}
\left(1-p^{-1}\right)k^{-1}\left(1+o\left(1\right)\right)<\delta_{k}<k^{-1}\left(1+o\left(1\right)\right),\:\alpha>2.\label{delta_2}
\end{equation}

Consider the case $1<\alpha<2$. One may show that inequalities (\ref{neq_1})
and (\ref{neq_2}) with the ansatz $\delta_{k}\rightarrow0$ as $k\rightarrow\infty$
lead to a~contradiction. Hence, in this case we make a~different
ansatz that $\delta_{k}\rightarrow\operatorname{const}$ as $k\rightarrow\infty$.
This gives us the inequalities
\[
\dfrac{p^{-\left(\alpha-1\right)k}p^{-\alpha+1}}{\delta_{k}}>-\dfrac{p^{-\left(\alpha-1\right)\left(k+1\right)}}{1-p^{-\alpha+1}}+\dfrac{\left(1+\delta_{k}\right)\left(1-p^{-\left(\alpha-2\right)k}\right)}{p^{\alpha-2}-1}p^{-k-1}-\dfrac{p^{-\left(\alpha-1\right)k}p^{-2\alpha+1}}{\left(1-p^{-1}\right)\left(1-p^{-\alpha}\right)\left(1+\delta_{k}\right)},
\]
\[
\dfrac{p^{-\left(\alpha-1\right)k}p^{-\alpha+1}}{\delta_{k}}<-\dfrac{p^{-\left(\alpha-1\right)k}}{p^{\alpha-1}-1}+\dfrac{1+\delta_{k}}{1-p^{-1}}\dfrac{1-p^{-\left(\alpha-2\right)k}}{p^{\alpha-2}-1}p^{-k-1}-\dfrac{p^{-\left(\alpha-1\right)k}p^{-2\alpha+1}}{1-p^{-\alpha}},
\]
which may be written as
\[
\delta_{k}^{2}+\left(1-\dfrac{p^{-\alpha+2}-1}{1-p^{-\alpha+1}}-\dfrac{p^{-\alpha+2}-1}{\left(1-p^{-1}\right)\left(p^{\alpha}-1\right)}\right)\delta_{k}-\left(p^{-\alpha+2}-1\right)<0,
\]
\[
\delta_{k}^{2}+\left(1-\dfrac{\left(1-p^{-1}\right)\left(p^{-\alpha+2}-1\right)}{1-p^{-\alpha+1}}-\dfrac{\left(1-p^{-1}\right)\left(p^{-\alpha+2}-1\right)}{p^{\alpha}-1}\right)\delta_{k}-\left(1-p^{-1}\right)\left(p^{-\alpha+2}-1\right)>0.
\]
We denote by $\delta_{\max}$ the maximal positive root of the equation
\[
\delta_{k}^{2}+\left(1-\dfrac{p^{-\alpha+2}-1}{1-p^{-\alpha+1}}-\dfrac{p^{-\alpha+2}-1}{\left(1-p^{-1}\right)\left(p^{\alpha}-1\right)}\right)\delta_{k}-\left(p^{-\alpha+2}-1\right)=0
\]
and denote by $\delta_{\min}$ the maximal positive root of the equation
\[
\delta_{k}^{2}+\left(1-\dfrac{\left(1-p^{-1}\right)\left(p^{-\alpha+2}-1\right)}{1-p^{-\alpha+1}}-\dfrac{\left(1-p^{-1}\right)\left(p^{-\alpha+2}-1\right)}{p^{\alpha}-1}\right)\delta_{k}-\left(1-p^{-1}\right)\left(p^{-\alpha+2}-1\right)=0.
\]
Then
\[
\delta_{\min}\left(1+o\left(1\right)\right)<\delta_{k}<\delta_{\max}\left(1+o\left(1\right)\right).
\]
One may show that $\lim_{\alpha\rightarrow2}\delta_{\max}=0$, $\lim_{\alpha\rightarrow1}\delta_{\max}=\infty$
and $\lim_{\alpha\rightarrow2}\delta_{\min}=0$, $\lim_{\alpha\rightarrow1}\delta_{\min}=\infty$.

Combining these results, we write down the lower and upper estimates
for $\delta_{k}$ as $k\rightarrow\infty$ for different~$\alpha$
as
\begin{equation}
D_{\min}<\delta_{k}<D_{\max},\label{delta_gen}
\end{equation}
where
\[
D_{\min}=\left(1+o\left(1\right)\right)\begin{cases}
\left(1-p^{-1}\right)\left(1-p^{-\alpha+2}\right)p^{-\left(\alpha-2\right)k}, & \alpha>2,\\
\left(1-p^{-1}\right)k^{-1}, & \alpha=2,\\
\delta_{\min}, & 1<\alpha<2,
\end{cases}
\]
\[
D_{\max}=\left(1+o\left(1\right)\right)\begin{cases}
\left(1-p^{-\alpha+2}\right)p^{-\left(\alpha-2\right)k}, & \alpha>2,\\
k^{-1}, & \alpha=2,\\
\delta_{\max}, & 1<\alpha<2.
\end{cases}
\]

\section*{Appendix~C. Estimate of $b_{k}$ for large $k$}

Let us estimate the residues $b_{k}$ as $k\rightarrow\infty$. To
this aim we write \eqref{b_k} as
\begin{equation}
b_{k}=\dfrac{p^{\alpha}-1}{\left(p^{\alpha-1}-1\right)^{2}}\left[\sum_{i=1}^{k}\dfrac{p^{-\alpha i}}{\left(p^{-k-1}+p^{-k-1}\delta_{k}-p^{-i}\right)^{2}}+\sum_{i=k+2}^{\infty}\dfrac{p^{-\alpha i}}{\left(p^{-k-1}+p^{-k-1}\delta_{k}-p^{-i}\right)^{2}}+\dfrac{p^{2-\alpha}p^{-\left(\alpha-2\right)k}}{\delta_{k}^{2}}\right]^{-1}.\label{b_k_final}
\end{equation}
Taking into account that in the case $\alpha\geq2$ we have $\delta_{k}\rightarrow0$
as $k\rightarrow\infty$, we write
\[
b_{k}=\dfrac{p^{\alpha}-1}{\left(p^{\alpha-1}-1\right)^{2}}\left[\sum_{i=1}^{k}\dfrac{p^{-\alpha i}}{\left(p^{-k-1}-p^{-i}\right)^{2}}+\sum_{i=k+2}^{\infty}\dfrac{p^{-\alpha i}}{\left(p^{-k-1}-p^{-i}\right)^{2}}+\dfrac{p^{2-\alpha}p^{-\left(\alpha-2\right)k}}{\delta_{k}^{2}}\right]^{-1}\left(1+o\left(1\right)\right).
\]

Let us estimate $b_{k}$ from the above. First, we consider the case
$\alpha>2$. Using \eqref{delta_gen}, we find that
\[
b_{k}<\dfrac{p^{\alpha}-1}{\left(p^{\alpha-1}-1\right)^{2}}\left[\sum_{i=1}^{k}\dfrac{p^{-\alpha i}}{\left(p^{-k-1}-p^{-1}\right)^{2}}+\sum_{i=k+2}^{\infty}\dfrac{p^{-\alpha i}}{\left(p^{-k-1}\right)^{2}}+\dfrac{p^{2-\alpha}p^{\left(\alpha-2\right)k}}{\left(1-p^{-\alpha+2}\right)^{2}}\right]^{-1}\left(1+o\left(1\right)\right).
\]
Summing and neglecting the terms that vanish for $k\rightarrow\infty$,
this gives
\[
b_{k}<\dfrac{\left(p^{\alpha}-1\right)\left(1-p^{-\left(\alpha-2\right)}\right)^{2}}{p^{2-\alpha}\left(p^{\alpha-1}-1\right)^{2}}p^{-\left(\alpha-2\right)k}\left(1+o\left(1\right)\right).
\]
For $\alpha=2$ we similarly have
\[
b_{k}<\dfrac{\left(1-p^{-2}\right)\left(1-p^{-1}\right)^{2}}{\left(p-1\right)^{2}}k^{-2}\left(1+o\left(1\right)\right).
\]
In the case $1<\alpha<2$ we have
\[
b_{k}<\dfrac{p^{\alpha}-1}{\left(p^{\alpha-1}-1\right)^{2}}\left[\dfrac{p^{-\alpha}-p^{-\alpha\left(k+1\right)}}{\left(1-p^{-\alpha}\right)\left(p^{-k-1}-p^{-1}\right)^{2}}+\dfrac{p^{-\alpha\left(k+2\right)}}{\left(1-p^{-\alpha}\right)\left(p^{-k-1}\right)^{2}}+\dfrac{p^{2-\alpha}p^{-\left(\alpha-2\right)k}}{\delta_{\min}^{2}}\right]^{-1}\left(1+o\left(1\right)\right),
\]
which gives
\[
b_{k}<\dfrac{\left(p^{\alpha}-1\right)\delta_{\min}^{2}}{p^{2-\alpha}\left(p^{\alpha-1}-1\right)^{2}}p^{-\left(2-\alpha\right)k}\left(1+o\left(1\right)\right).
\]

Let us now estimate $b_{k}$ from below. For $\alpha>2$, we have
\[
b_{k}>\dfrac{p^{\alpha}-1}{\left(p^{\alpha-1}-1\right)^{2}}\left[\sum_{i=1}^{k}\dfrac{p^{-\alpha i}}{\left(p^{-k-1}-p^{-i}\right)^{2}}+\sum_{i=k+2}^{\infty}\dfrac{p^{-\alpha i}}{\left(p^{-k-1}-p^{-i}\right)^{2}}+\dfrac{p^{2-\alpha}p^{\left(\alpha-2\right)k}}{\left(1-p^{-1}\right)^{2}\left(1-p^{-\alpha+2}\right)^{2}}\right]^{-1}\left(1+o\left(1\right)\right),
\]
which gives
\[
\dfrac{\left(p^{\alpha}-1\right)\left(1-p^{-1}\right)^{2}\left(1-p^{-\left(\alpha-2\right)}\right)^{2}}{p^{2-\alpha}\left(p^{\alpha-1}-1\right)^{2}}p^{-\left(\alpha-2\right)k}\left(1+o\left(1\right)\right)<b_{k}.
\]
In the case $\alpha=2$, we have
\[
b_{k}>\dfrac{\left(p^{2}-1\right)}{\left(p-1\right)^{2}}k^{-2}\left(1+o\left(1\right)\right).
\]
For $1<\alpha<2$, we get
\[
b_{k}>\dfrac{p^{\alpha}-1}{\left(p^{\alpha-1}-1\right)^{2}}\left[\dfrac{p^{-\alpha1}-p^{-\alpha\left(k+1\right)}}{\left(1-p^{-\alpha}\right)\left(p^{-k-1}-p^{-k}\right)^{2}}+\dfrac{p^{-\alpha k}p^{-2\alpha}p^{-\alpha i}}{\left(1-p^{-\alpha}\right)\left(p^{-k-1}-p^{-k-2}\right)^{2}}+\dfrac{p^{2-\alpha}p^{-\left(\alpha-2\right)k}}{\delta_{\max}^{2}}\right]^{-1}\left(1+o\left(1\right)\right),
\]
and hence,
\[
b_{k}>\dfrac{\left(p^{\alpha}-1\right)\delta_{\max}^{2}}{p^{2-\alpha}\left(p^{\alpha-1}-1\right)^{2}}p^{-\left(2-\alpha\right)k}\left(1+o\left(1\right)\right).
\]

Collecting the above inequality, we write the lower and upper estimates
for $b_{k}$ as $k\rightarrow\infty$ for different~$\alpha$ as
follows:
\begin{equation}
a\left(\alpha,k\right)\left(1+o\left(1\right)\right)<b_{k}<b\left(\alpha,k\right)\left(1+o\left(1\right)\right),\label{b_gen}
\end{equation}
where
\[
a\left(\alpha,k\right)=\left(1+o\left(1\right)\right)\begin{cases}
\dfrac{\left(p^{\alpha}-1\right)\left(1-p^{-1}\right)^{2}\left(1-p^{-\alpha+2}\right)^{2}}{p^{2-\alpha}\left(p^{\alpha-1}-1\right)^{2}}p^{-\left(\alpha-2\right)k}, & \alpha>2,\\
\dfrac{\left(p^{2}-1\right)}{\left(p-1\right)^{2}}k^{-2}, & \alpha=2,\\
\dfrac{\left(p^{\alpha}-1\right)\delta_{\max}^{2}}{p^{2-\alpha}\left(p^{\alpha-1}-1\right)^{2}}p^{-\left(2-\alpha\right)k}, & 1<\alpha<2,
\end{cases}
\]
\[
b\left(\alpha,k\right)=\left(1+o\left(1\right)\right)\begin{cases}
\dfrac{\left(p^{\alpha}-1\right)\left(1-p^{-\alpha+2}\right)^{2}}{p^{2-\alpha}\left(p^{\alpha-1}-1\right)^{2}}p^{-\left(\alpha-2\right)k}, & \alpha>2,\\
\dfrac{\left(1-p^{-2}\right)\left(1-p^{-1}\right)^{2}}{\left(p-1\right)^{2}}k^{-2}, & \alpha=2,\\
\dfrac{\left(p^{\alpha}-1\right)\delta_{\min}^{2}}{p^{2-\alpha}\left(p^{\alpha-1}-1\right)^{2}}p^{-\left(2-\alpha\right)k}, & 1<\alpha<2.
\end{cases}
\]

\section*{Appendix~D. Asymptotic estimate of the series}

Here, we shall obtain the estimate for the series
\begin{equation}
S_{1}(t)=\sum\limits _{i=1}^{\infty}i^{-k}a^{-i}\exp\left(-b^{-i}t\right),\quad t\ge0,\;a>1,\;b>1,\;k\geq0\label{ser}
\end{equation}
for large~$t$.

\begin{lemma}For series \eqref{ser} we have the following asymptotic
estimate as $t\rightarrow\infty:$
\[
a^{-1}\left(\ln b\right)^{k-1}t^{-\tfrac{\ln a}{\ln b}}\left(\ln t\right)^{-k}\Gamma\left(\frac{\ln a}{\ln b}\right)\left(1+o\left(1\right)\right)\le S_{1}(t)\le a\left(\ln b\right)^{k-1}t^{-\tfrac{\ln a}{\ln b}}\left(\ln t\right)^{-k}\Gamma\left(\frac{\ln a}{\ln b}\right)\left(1+o\left(1\right)\right),
\]
where $\Gamma\left(z\right)$ is the Gamma function and $o\left(1\right)$
is an infinitesimal quantity as $t\rightarrow\infty$ .\label{lemma4}
\end{lemma}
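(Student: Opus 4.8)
The plan is to sandwich the sum $S_1(t)$ between two copies of the integral
\[
I(t)=\int_1^\infty\psi(x)\,dx,\qquad \psi(x)=x^{-k}a^{-x}e^{-b^{-x}t},
\]
and then to evaluate the asymptotics of $I(t)$ exactly. Throughout set $\mu=\frac{\ln a}{\ln b}$, which is positive since $a,b>1$, so that the Gamma integral below converges. The source of the two stated constants $a$ and $a^{-1}$ will be the variation of the single factor $a^{-x}$ across a unit interval; every other factor is controlled with no asymptotic loss. I choose the lower limit $1$ in $I(t)$ precisely to avoid the nonintegrable singularity of $x^{-k}$ at the origin when $k\ge1$.

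First I would record term-by-term inequalities. Because $b>1$, the factor $e^{-b^{-x}t}$ is nondecreasing in $x$, while $x^{-k}$ and $a^{-x}$ are nonincreasing. Hence on $[i,i+1]$ one has $\psi(x)\ge (i+1)^{-k}a^{-(i+1)}e^{-b^{-i}t}$ (the exponential being bounded below by its left-endpoint value), and integrating over the unit interval gives $\psi(i)\le a\left(\tfrac{i+1}{i}\right)^{k}\int_i^{i+1}\psi(x)\,dx$. Symmetrically, on $[i-1,i]$ one has $\psi(x)\le (i-1)^{-k}a^{-(i-1)}e^{-b^{-i}t}$ (the exponential bounded above by its right-endpoint value), whence $\psi(i)\ge a^{-1}\left(\tfrac{i-1}{i}\right)^{k}\int_{i-1}^{i}\psi(x)\,dx$ for $i\ge2$. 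Summing the first inequality over $i\ge1$ and the second over $i\ge2$ (discarding the nonnegative term $\psi(1)$ in the latter) produces
\[
a^{-1}\sum_{i\ge2}\left(\tfrac{i-1}{i}\right)^{k}\int_{i-1}^{i}\psi\,dx\ \le\ S_1(t)\ \le\ a\sum_{i\ge1}\left(\tfrac{i+1}{i}\right)^{k}\int_{i}^{i+1}\psi\,dx .
\]

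To pass from these to $a^{-1}I(t)(1+o(1))\le S_1(t)\le aI(t)(1+o(1))$ I would use that $\psi$ is unimodal with maximum near $i^{*}\sim\frac{\ln t}{\ln b}\to\infty$, which follows from the log-concavity computation $\frac{d^{2}}{dx^{2}}\ln\psi = kx^{-2}-t(\ln b)^{2}b^{-x}<0$ in the relevant range. Splitting each sum at $i_0=i^{*}/2$, the correction factors $\left(\tfrac{i\pm1}{i}\right)^{k}=1+O(1/i)=1+o(1)$ are uniform for $i\ge i_0$, while for $i<i_0$ the total mass $\int_1^{i_0}\psi\,dx$ is negligible relative to $I(t)$ by the concentration of $\psi$ near $i^{*}$. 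Hence both outer sums equal $I(t)(1+o(1))$.

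Finally I would compute $I(t)$ by the substitution $u=b^{-x}t$, under which $x=\frac{\ln t-\ln u}{\ln b}$, so that $a^{-x}=t^{-\mu}u^{\mu}$, $x^{-k}=(\ln b)^{k}(\ln t-\ln u)^{-k}$ and $dx=-\frac{du}{u\ln b}$; this turns the integral into
\[
I(t)=(\ln b)^{k-1}t^{-\mu}\int_0^{t/b}(\ln t-\ln u)^{-k}\,u^{\mu-1}e^{-u}\,du .
\]
The remaining and main obstacle is to show that the last integral equals $(\ln t)^{-k}\Gamma(\mu)(1+o(1))$, i.e.\ that the slowly varying factor $(\ln t-\ln u)^{-k}$ may be replaced by $(\ln t)^{-k}$. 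I would do this by cutting the range at $u=\ln t$: on $0<u\le\ln t$ one has $(\ln t-\ln u)^{-k}=(\ln t)^{-k}(1+o(1))$ uniformly (the region $u\to0$ being harmless since $u^{\mu-1}$ is integrable there and $(\ln t-\ln u)^{-k}\le(\ln t)^{-k}$ for $u<1$), so this part gives $(\ln t)^{-k}\Gamma(\mu)(1+o(1))$; on $\ln t<u\le t/b$ the factor $(\ln t-\ln u)^{-k}$ stays bounded by $(\ln b)^{-k}$ while $\int_{\ln t}^{\infty}u^{\mu-1}e^{-u}\,du=O\!\left((\ln t)^{\mu-1}t^{-1}\right)=o\!\left((\ln t)^{-k}\right)$, so this part is negligible. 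Substituting the resulting asymptotics $I(t)=(\ln b)^{k-1}t^{-\mu}(\ln t)^{-k}\Gamma(\mu)(1+o(1))$ into the two-sided bound of the second paragraph yields exactly the lower constant $a^{-1}(\ln b)^{k-1}\Gamma(\mu)$ and upper constant $a(\ln b)^{k-1}\Gamma(\mu)$ of the lemma, completing the proof.
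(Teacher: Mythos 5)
Your proof is correct and takes essentially the same route as the paper's: both sandwich the sum between integrals of $x^{-k}a^{-x}e^{-b^{-x}t}$ over unit intervals (the one-unit shift producing exactly the constants $a^{-1}$ and $a$), then substitute $u=b^{-x}t$ to reduce to a Gamma-type integral and replace the slowly varying logarithmic factor by $(\ln t)^{-k}$. The only differences are bookkeeping: the paper shifts the arguments of the factors $x^{-k}$, $a^{-x}$, $e^{-b^{-x}t}$ so that its integral bounds are exact and no concentration argument is needed, whereas you pay ratio factors $\left((i\pm 1)/i\right)^{k}$ and control them via unimodality/concentration; in return, you spell out the final asymptotic evaluation of the Gamma-type integral, which the paper merely asserts.
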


\begin{proof} Instead of series \eqref{ser}, we consider the series
\[
S_{2}(t)=\sum\limits _{i=2}^{\infty}i^{-k}a^{-i}\exp\left(-b^{-i}t\right)=S_{1}(t)-a^{-1}\exp\left(-b^{-1}t\right).
\]
We note that $\dfrac{1}{x^{k}}a^{-x}$ is a decreasing function and
$\exp\left(-b^{-x}t\right)$ is an increasing function of~$x$. Hence,
on the interval $i\le x\le i+1$ we have
\begin{equation}
\frac{1}{x^{k}}a^{-x}\exp\left(-b^{-(x-1)}t\right)\le a^{-i}\exp\left(-b^{-i}t\right)\le\frac{1}{(x-1)^{k}}a^{-(x-1)}\exp\left(-b^{-x}t\right).\label{A2-1}
\end{equation}
Integrating (\ref{A2-1}) in $x$ from $i$ to $i+1$, we find that
\begin{equation}
a^{-1}\int_{i}^{i+1}\frac{1}{x^{k}}a^{-(x-1)}\exp\left(-b^{-(x-1)}t\right)dx\le a^{-i}\exp\left(-b^{-i}t\right)\le a\int_{i}^{i+1}\frac{1}{(x-1)^{k}}a^{-x}\exp\left(-b^{-x}t\right)dx.\label{A3-1}
\end{equation}
Next, summing (\ref{A3-1}) in~$i$ from~$2$ to~$\infty$, this
gives
\[
S_{\min}(t)\le S_{2}(t)\le S_{\max}(t),
\]
where
\begin{equation}
S_{\min}(t)=a^{-1}\intop_{2}^{\infty}\frac{1}{x^{k}}a^{-(x-1)}\exp\left(-b^{-(x-1)}t\right)dx,\label{S_min}
\end{equation}
\begin{equation}
S_{\max}(t)=a\intop_{2}^{\infty}\frac{1}{(x-1)^{k}}a^{-x}\exp\left(-b^{-x}t\right)dx.\label{S_max}
\end{equation}
In (\ref{S_min}) and (\ref{S_max}) we change the variable of integration
as $b^{-(x-1)}t=y$ and $b^{-x}t=y$, respectively. We have
\[
S_{\min}(t)=a^{-1}\left(\ln b\right)^{-1}t^{-\tfrac{\ln a}{\ln b}}\intop_{0}^{b^{-1}t}\left(\dfrac{\ln t-\ln y}{\ln b}+1\right)^{-k}y^{\tfrac{\ln a}{\ln b}-1}\exp\left(-y\right)dy,
\]
\[
S_{\max}(t)=a\left(\ln b\right)^{-1}t^{-\tfrac{\ln a}{\ln b}}\intop_{0}^{b^{-2}t}\left(\dfrac{\ln t-\ln y}{\ln b}-1\right)^{-k}y^{\tfrac{\ln a}{\ln b}-1}\exp\left(-y\right)dy.
\]
If $k>0$, then, as $t\rightarrow\infty$,
\[
S_{\min}(t)=a^{-1}\left(\ln b\right)^{k-1}t^{-\tfrac{\ln a}{\ln b}}\left(\ln t\right)^{-k}\Gamma\left(\frac{\ln a}{\ln b}\right)\left(1+o\left(1\right)\right),
\]
\[
S_{\max}(t)=a\left(\ln b\right)^{k-1}t^{-\tfrac{\ln a}{\ln b}}\left(\ln t\right)^{-k}\Gamma\left(\frac{\ln a}{\ln b}\right)\left(1+o\left(1\right)\right),
\]
which gives the assertion of the lemma.\end{proof}





\bibliographystyle{elsarticle-num}



\end{document}